\documentclass[runningheads]{llncs}

\usepackage{amsmath}
\usepackage{amssymb}
\usepackage{xspace}
\usepackage{hyperref}

\usepackage[appendix=inline]{apxproof} 
\newtheoremrep{apxtheorem}[theorem]{Theorem}

\usepackage{todonotes}

\usepackage{tikz}
 \usetikzlibrary{trees}
 \usetikzlibrary{shapes}
 \usetikzlibrary{fit}
 \usetikzlibrary{shadows}
 \usetikzlibrary{backgrounds}
 \usetikzlibrary{arrows,automata}
 \usetikzlibrary{calc}

\tikzset{
   n/.style= {circle,fill,inner sep=1.5pt,node distance=2cm}
  ,acc/.style={circle,draw,inner sep=3pt,node distance=2cm}
  ,phantom/.style={circle},
  ,arr/.style={->, >=stealth, semithick, shorten <= 3pt, shorten >= 3pt}
}

\usepackage{xcolor}
\usepackage[notion,quotation]{knowledge}

\usepackage{quotes}
\definecolor{Blue Sapphire}{HTML}{005f73} 
\definecolor{Gamboge}{HTML}{ee9b00}
\definecolor{Ruby Red}{HTML}{9b2226}

\IfKnowledgePaperModeTF{
}{
	\knowledgestyle{intro notion}{color={Ruby Red}, emphasize}
	\knowledgestyle{notion}{color={Blue Sapphire}}
	\hypersetup{
		colorlinks=true,
		breaklinks=true,
		linkcolor={Blue Sapphire}, 
		citecolor={Blue Sapphire}, 
		filecolor={Blue Sapphire}, 
		urlcolor={Blue Sapphire},
	}
}
\IfKnowledgeCompositionModeTF{
	\knowledgeconfigure{anchor point color={Ruby Red}, anchor point shape=corner}
	\knowledgestyle{intro unknown}{color={Gamboge}, emphasize}
	\knowledgestyle{intro unknown cont}{color={Gamboge}, emphasize}
	\knowledgestyle{kl unknown}{color={Gamboge}}
	\knowledgestyle{kl unknown cont}{color={Gamboge}}
}{
}


 \knowledge{notion}
 | independence relation
 | independence relations
 | independent

 \knowledge{notion}
 | diamond

 \knowledge{notion}
 | communication architecture

 \knowledge{notion}
 | distributed alphabet

 \knowledge{notion}
 | asynchronous automaton
 | asynchronous automata
 | AA

 \knowledge{notion}
 | distributively recognized
 | recognizes distributively

 \knowledge{notion}
 | tree-like
 | TCA

 \knowledge{notion}
 | tree of $a$
 | $\tree_a$
 | \tree_a

 \knowledge{notion}
 | $\diam$
 | \diam

 \knowledge{notion}
 | $\diamtree$
 | \diamtree
 | $\diamtree(\tree_p)$

 \knowledge{notion}
 | view
 | views
 | $\traceview{X}{w}$
 | $\traceview{\tree}{w}$
 | $\traceview{\{p\}}{w}$
 | $\traceview{\{q\}}{w}$
 | \traceview{\{q\}}{w}
 | $\traceview{\{p\}}{w'}$

 \knowledge{notion}
 | shared parent view
 | $\traceparentview{p}{w}$
 | \traceparentview{p}{w}
 | $\traceparentview{p_i}{w}$

 \knowledge{notion}
 | \stateview{X}{w}
 | \stateview{\{p\}}{w}

 \knowledge{notion}
 | \stateparentview{p}{w}
 | \stateparentview{p'}{w}
 | \stateparentview{p}{w'}

 \knowledge{notion}
 | triangulated

 \knowledge{notion}
 | letter automaton

 \knowledge{notion}
 | maximal

 \knowledge{notion}
 | controllable

 \knowledge{notion}
 | uncontrollable

 \knowledge{notion}
 | controller
 | controllers

 \knowledge{notion}
 | $p$-aware
 | $\leaf$-aware
 | \leaf-aware

 \knowledge{notion}
 | $p$-memoryless
 | \leaf-memoryless

 \knowledge{notion}
 | winning
 | winning controller

 \knowledge{notion}
 | \leaf-short
 | \leaf-short automaton

 \knowledge{notion}
 | Control Problem

 \knowledge{notion}
 | \leaf-local strategy
 | \leaf-local strategies

 \knowledge{notion}
 | local actions
 | local action

\newcommand{\N}{\mathbb{N}}
\newcommand{\Proc}{\mathbb{P}}
\newcommand{\Alp}{\Sigma}

\newcommand{\diam}{\mathrm{Diam}}
\newcommand{\A}{\mathcal{A}}

\newcommand{\Lang}{\mathcal{L}}
\newcommand{\arch}{\ensuremath{\mathbb{C}}\xspace}


\newcommand{\CA}{\mathbb{C}}
\newcommand{\Arch}{\mathfrak{A}}
\newcommand{\Acc}{\mathit{Acc}}
\newcommand{\tree}{T}

\newcommand{\diamtree}{\mathrm{TDiam}}
\newcommand{\traceview}[2]{\mathrm{view}_{#1}(#2)}
\newcommand{\traceparentview}[2]{\mathrm{view}_{#1}^{\uparrow}(#2)}
\newcommand{\stateview}[2]{\sigma_{#1}(#2)}
\newcommand{\stateparentview}[2]{\sigma_{#1}^{\uparrow}(#2)}


\newcommand{\ifroot}[4]{\textsc{is-root}_{#1}^{#2}\textsc{?}{#3}\textsc{:}{#4}}
\newcommand{\invdef}[1]{(\texttt{I}_{\textsc{def}}[#1])}
\newcommand{\invstate}[1]{(\texttt{I}_S[#1])}
\newcommand{\invstateparent}[1]{(\texttt{I}_S^\uparrow[#1])}


\newcommand{\sys}{\mathit{sys}}
\newcommand{\env}{\mathit{env}}
\newcommand{\Runs}{\mathrm{Runs}}
\newcommand{\C}{\mathcal{C}}
\newcommand{\choice}{\mathit{ch}}
\newcommand{\leaf}{\ensuremath{\ell}\xspace}
\newcommand{\minusell}{{\ensuremath{\setminus \leaf}}}
\newcommand{\TS}{\mathit{TS}}
\renewcommand{\ts}{\mathit{ts}}

\begin{document}
\title{From Trees to Tree-Like: Distribution and Synthesis for Asynchronous Automata} 

\titlerunning{Distribution and Synthesis for Tree-Like Asynchronous Automata} 

\author{Mathieu Lehaut\inst{1}\orcidID{0000-0002-6205-0682}\thanks{Supported by Swedish research council (VR) project (No. 2020-04963) and NCN grant 2021/41/B/ST6/00535.} \and
Anca~Muscholl\inst{2}\orcidID{0000-0002-8214-204X}\thanks{Supported by ANR-23-CE48-0005 grant PaVeDyS.}$^\dagger$ \and
Nir~Piterman\inst{3}\orcidID{0000-0002-8242-5357}\thanks{Supported by Swedish research council (VR)
project (No. 2020-04963) and the Wallenberg AI, Autonomous Systems and Software Program (WASP) funded by the Knut and Alice Wallenberg Foundation.}\thanks{Part of this research was conducted while visiting the Simons Institute for the Theory of Computing.}
}

\institute{University of Warsaw, Poland \and
LaBRI, University of Bordeaux, France \and
University of Gothenburg and Chalmers University of Technology, Gothenburg, Sweden
}

\authorrunning{M. Lehaut, A. Muscholl, and N. Piterman} 



\maketitle 

\begin{abstract}
We revisit constructions for distribution and synthesis of Zielonka's asynchronous automata in restricted settings.
We show first a simple, quadratic, distribution construction for asynchronous automata, where the process architecture is tree-like. An architecture is  tree-like if there is an underlying spanning tree of the architecture and communications are local on the tree.
This quadratic distribution result generalizes the known construction for tree
architectures and improves on an older, exponential construction for triangulated dependence alphabets. 
Lastly we consider the problem of distributed controller synthesis and show that it is decidable for tree-like architectures.
This extends the decidability boundary from tree architectures to tree-like keeping the same $\text{Tower}_d(n)$ complexity bound, where $n$ is the size of the system and $d \ge 0$ the depth of the
process tree.

\keywords{distributed synthesis, Zielonka automata, language distribution} 
\end{abstract}

\section{Introduction}
\label{sec:intro}

We consider a canonical formalism  for distributed systems
with a fixed communication structure~--~Zielonka's \emph{asynchronous automata}.
These are a well-known model that supports distributed synthesis
under a fixed communication topology, rooted in the theory of Mazurkiewicz
traces~\cite{Maz77}. In Zielonka automata  processes are
connected to a specific set of channels, each process to their own subset of the
channels. Processes communicate by synchronizing on channels they are connected
to. During communication, all involved processes share their local states and
then, based on this mutually shared information, move to new states. In
addition, this model is rendez-vous: communication occurs only if all participants
(processes connected to the channel) agree to it. The model is asynchronous as
communications on two channels that do not have a process in common can be
performed in parallel. Highlighting the importance of this model is Zielonka’s seminal
result about distribution of regular languages. Given a communication
architecture – which process is connected to which channels – and a regular
language that respects the asynchrony of the architecture, it is always possible
to distribute this language into an asynchronous automaton
\cite{Zielonka87}.

Zielonka’s result is a prominent, and rare, example of distributed
synthesis, yet computationally expensive. Starting with a regular language accepted by
an automaton with $n$ states and a topology with $p$ processes, an equivalent
asynchronous automaton of size $O(4^{p^4}n^{p^2})$ can be
constructed~\cite{GenestGMW10}. Note that the exponential is only due to the
number of processes (which is part of the input), but this is indeed needed, as
shown by a lower bound in~\cite{GenestGMW10}.  The
construction
is quite involved and, even after nearly 40 years, it is not widely or easily
understood. It has been the source for much research on how to understand,
explain, and improve it (e.g.~\cite{MukundS97,genest2006constructing,GenestGMW10,adgs13}). Recently, an alternative construction for asynchronous automata
has been proposed using a partial-order variant of Propositional Dynamic
Logic~\cite{AGKW24}.

The complexity of the general Zielonka result prompted researchers to look at
simpler cases where distribution could be easier. One notable example is the
construction of Krishna and Muscholl~\cite{KrishnaM13} who show that when the communication
architecture is restricted to a tree – every channel is listened to by at most two
processes and there are no cycles – then every process needs a quadratic number
of states in that of the original sequential automaton.  Muscholl
and Diekert~\cite{DiekertM96} showed a simpler construction, however exponential in the number
of states of the sequential automaton, for triangulated dependence alphabets.

The asynchronous automata model is notable also for its usage in control, similar to Ramadge and Wonham's
supervisory control of discrete event systems~\cite{RW87,LRT18}. 
For systems communicating synchronously by shared variables, it has been
established early on that distributed control is undecidable \cite{PR90}, and
only decidable for very restricted communication architectures, moreover with very
high complexity \cite{KV01,FS05}.
In contrast, for distributed control based on asynchronous communication using Zielonka automata
researchers have found more complex architectures for which the problem was
still decidable.
Notably, control for $\omega$-regular
winning conditions was shown to be decidable when communication is restricted to a tree
\cite{muscholl2014distributed} (see also~\cite{GGMW13} for local reachability conditions). 
Their construction has $\text{Tower}(d)$ complexity in the depth $d$ of the
tree, but works for an architecture that is undecidable in the synchronous
shared-variable world.
Similarly, for \emph{uniformly well-connected} architectures, the problem is
again decidable but this time in exponential space \cite{GastinSZ09}. On a
general note, distributed control based on Zielonka automata, also known as
asynchronous games with causal memory, is equivalent to so-called Petri games~
\cite{Finkbeiner19concur}.
However, in spite of earlier hopes emanating from these general architectures having a decidable control problem, the general distributed synthesis has recently been established as undecidable \cite{Gim22}. 

Recently, Hausmann et~al. studied distribution in the model of reconfigurable asynchronous automata \cite{HausmannLP24}.
Their model generalizes asynchronous automata by allowing processes to change the set of channels they communicate on at runtime. 
They established that the simple distribution construction of Krishna and Muscholl \cite{KrishnaM13} is applicable also to a more general architecture they call \emph{tree-like}.
That is, processes are allowed to communicate more freely as long as their communication has an underlying spanning tree and communications are local on the tree.
Here we present their construction in the context of asynchronous automata removing the complex notations related to the reconfiguration.
While their construction was exponential in the number of processes in order to follow the structure of the tree, when the communication architecture is fixed this is no longer required and the quadratic construction of Krishna and Muscholl emerges.
Furthermore, we show that this construction also improves a previously known exponential
construction by Diekert and Muscholl in the context of triangulated dependence graphs
\cite{DiekertM96}, as they turn out to be equivalent to tree-like architectures.
Finally, we revisit the distributed synthesis result for tree architectures
\cite{muscholl2014distributed} and show that this construction as well can be
generalized to the case of tree-like architectures.

\smallskip

For convenience, technical terms and notations in the
electronic version of this manuscript are hyper-linked to their 
definitions (cf.~\url{https://ctan.org/pkg/knowledge}).

\section{Preliminaries}
\label{sec:prelim}

\subsection{Deterministic Finite Automata}

A deterministic finite automaton (DFA) over alphabet $\Sigma$ 
is denoted as $\A = (\Sigma,S,\Delta,s_0, F)$, where
$S$ is the finite set of states, 
$\Delta:S\times \Sigma \to S$ the partial transition
function, $s_0\in S$ the initial state,
and $F\subseteq S$ the set of accepting states.
Given a word $w=a_0\cdots a_{n-1}$, an initial run of $A$ on $w$ is a path
$s_0 \stackrel{a_0}{\to} s_1 \stackrel{a_1}{\to} \ldots \stackrel{a_{n-1}}{\to}
s_n$ of $\A$, i.e.~for every $0\leq i < n $ we have $s_{i+1} = \Delta(s_i,a_i)$.
An initial run is accepting if $s_n \in F$ and then $w$ is accepted by
$\A$.
The language of $\A$, denoted by $\Lang(\A)$, is the set of
words accepted by $\A$. Often we abuse notation and write $\Delta(s,u)$ for the
state $s'$ reached from $s$ on the
word $u$ (or just $s \stackrel{u}{\to} s'$).

An \intro{independence relation} is a symmetric, irreflexive relation $I \subseteq
\Sigma\times \Sigma$.
Two words $u, v \in \Sigma^*$ are said
to be $I$-indistinguishable, denoted by $u \sim_I v$, if one can start
from $u$, repeatedly switch two consecutive independent letters, and
end up with $v$.
That is, $\sim_I$ is the transitive closure of the relation $\{(uabv,ubav) \mid (a,b)\in I, \, u,v\in \Sigma^*\}$.
We denote by $[u]_I$ the $I$-equivalence class of a word $u\in\Sigma^*$. This is
a.k.a.~a Mazurkiewicz trace~\cite{Maz77}.
Let $\A = (\Sigma,S,\Delta,s_0,F)$ be a deterministic automaton over
$\Sigma$.
We say that $\A$ is $I$-\intro{diamond} if for all pairs of independent
letters $(a,b) \in I$ and all states $s \in S$, we have $\Delta(s,ab)
= \Delta(s,ba)$.
If $\A$ has this property, then a word $u$ is accepted by $\A$ if and
only if all words in $[u]_I$ are accepted.

\subsection{Asynchronous Automata}
A \intro{communication architecture} 
$\arch: \Sigma \to 2^\Proc$ associates with each letter the
subset of processes reading it.
We put $\arch^{-1}(p) = \{a \in \Sigma \mid p \in \arch(a)\}$.
A \intro{distributed alphabet} is $(\Sigma,\arch)$, where $\arch$ is a
communication architecture.
It induces an \kl{independence relation}
$I(\arch) \subseteq \Sigma\times\Sigma$ by $(a,b)\in I(\arch)$ iff
$\arch(a)\cap
\arch(b)=\emptyset$.
The complement of the independence relation is a dependence relation $D(\arch) = \Sigma \times \Sigma \setminus I(\arch)$.
It is simple to see that $(a,b)\in D(\arch)$ iff $\arch(a)\cap \arch(b)\neq \emptyset$.
A dependence relation $D$ induces  a graph $G_D=(\Sigma,D)$, where $\Sigma$ is the set of nodes and $D$ is the set of edges. 
We say that $\arch$ is binary if for each $a\in \Sigma$ we have
$|\arch(a)|\le 2$.
A binary $\arch$ induces a tree if the graph $(\Proc,E)$, where
$(p,q)\in E$ iff $\arch(a)=\{p,q\}$ with $p \not= q$ for some $a\in \Sigma$, is a tree.
\AP For any process $p \in \Proc$, 
let $\Alp_p = \{a \in \Alp \mid \{p\}=
\arch(a)\}$ be the set of $p$-\intro{local actions}, that is the set
of actions involving only $p$.

\AP An \intro{asynchronous automaton} (in short: \reintro{AA}) \cite{Zielonka87} over a
\kl{distributed alphabet} $(\Sigma,\arch)$ and processes $\Proc$ is a tuple
$\mathcal{B} = ((S_p)_{p \in \Proc}, (s^0_p)_{p \in \Proc}$,
$(\delta_a)_{a \in \Alp}, \Acc)$
such that:
\begin{itemize}
	\item $S_p$ is the finite set of states for process $p$, and $s^0_p
	\in S_p$ is its initial state,
	\item $\delta_a: \prod_{p \in \arch(a)} S_p \to \prod_{p \in
	\arch(a)}
	S_p$ is a partial transition function for letter $a$ that
	only depends on the states of processes in $\arch(a)$ and changes
	them,
	\item $\Acc \subseteq \prod_{p \in \Proc} S_p$ is a set of accepting
	states.
\end{itemize}
A global state of $\mathcal{B}$ is  $\textbf{s} = (s_p)_{p
\in \Proc}$, giving the state of each process.
For a global state $\textbf{s}$ and a subset $P \subseteq \Proc$, we
denote by $\textbf{s}\downarrow_P = (s_p)_{p \in P}$ the part of
$\textbf{s}$ consisting of states from processes in $P$.

An initial run of $\mathcal{B}$ on a word $a_1 a_2\ldots a_n$ is a path $\textbf{s}_0 \stackrel{a_1}{\to}
\textbf{s}_1 \stackrel{a_2}{\to}
\dots \stackrel{a_n}{\to} \textbf{s}_n$ of the product (global) automaton from
the initial state $\textbf{s}_0 =
(s^0_p)_{p \in \Proc}$: for all $0 < i \leq n$, $\textbf{s}_i \in
\prod_{p \in \Proc} S_p$, $a_i \in \Alp$, satisfying 
$\textbf{s}_{i}\downarrow_{\arch(a_i)} =
\delta_{a_i}(\textbf{s}_{i-1}\downarrow_{\arch(a_i)})$
 and
$\textbf{s}_{i}\downarrow_{\Proc\setminus\arch(a_i)} =
\textbf{s}_{i-1}\downarrow_{\Proc\setminus\arch(a_i)}
$.
An initial run is accepting if $\textbf{s}_n$ belongs to $\Acc$.
The word $a_1 a_2 \dots a_n$ is accepted by $\mathcal{B}$ if such an
accepting run
exists (note that automata are deterministic but runs on certain words
may not exist).
The language of $\mathcal{B}$, denoted by $\Lang(\mathcal{B})$, is the
set of words
accepted by $\mathcal{B}$.
Let $\Runs(\mathcal{B})$ denote the set of runs of $\mathcal{B}$, and $\Runs_p(\mathcal{B})$ their projection on the $p$-component of the state.

\AP We say that a language $\Lang \subseteq \Sigma^*$ over
$(\Sigma,\arch)$ is \intro{distributively recognized} if
there exists an \kl{asynchronous automaton} $\mathcal{B}$
such that $\Lang(\mathcal{B})=\Lang$.

\begin{theorem}[\cite{Zielonka87,GenestGMW10,KrishnaM13}]
	Given an $I(\arch)$-\kl{diamond} deterministic automaton $\A$, there
	exists an \kl{AA} $\mathcal{B}$
	that distributively recognizes the language of $\A$.
	In general, if $\A$ has $n$ states then every process of $\mathcal{B}$ has
	$O(4^{|\Proc|^4} n^{|\Proc|^2})$ states.
	If $\arch$ induces a tree, then every process of $\mathcal{B}$ has $O(n^2)$
	states.
\end{theorem}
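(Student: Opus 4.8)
The plan is to build, for each process, a bounded summary of the $\A$-computation on that process's causal past, to update it on local moves, and to recombine summaries on synchronizations. Since $\A$ is $I(\arch)$-diamond, it assigns a well-defined state to every trace (independent of linearization) and acceptance depends only on the trace; so it suffices that the tuple of summaries determines $\A(w)$ and that every transition is computable from the summaries of the participating processes. For each $p$ let $\sigma_p(w)$ denote the state $\A$ reaches on $p$'s view $\traceview{\{p\}}{w}$. The single geometric fact I would rely on is: across any tree edge $\{p,p'\}$, the events that one endpoint has seen since their last common synchronization but the other has not involve only processes on that endpoint's side of the edge. This holds \emph{because} the graph is a tree: any causal flow between the two sides must pass through $\{p,p'\}$, hence through a synchronization on that edge, and the last such synchronization already bounds the common past. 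Consequently the two unshared contributions have disjoint process supports and are therefore independent.

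For the general bound I would invoke Zielonka's construction: each process maintains, via a gossip/time-stamping protocol, its latest information about every other process together with bounded time-stamps encoding the order of the most recent events, and synchronizing processes exchange and merge this information. The state bound $O(4^{|\Proc|^4}n^{|\Proc|^2})$ is exactly the one established in \cite{GenestGMW10}, and the exponential dependence on $|\Proc|$ is unavoidable there, so I would simply cite it.

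For the tree bound I would root the process tree and give each process $p$ the pair $(\sigma_p^{\uparrow}(w),\sigma_p(w))$, where $\sigma_p(w)$ is its view state and $\sigma_p^{\uparrow}(w)$ is the state on its shared parent view $\traceparentview{p}{w}$, i.e.\ the $\A$-state at the last synchronization between $p$ and its parent. This is a pair of states of $\A$, giving $O(n^2)$ states per process. A $p$-local action appends one letter to $p$'s view, so it applies a single $\A$-transition to $\sigma_p(w)$ and leaves $\sigma_p^{\uparrow}(w)$ unchanged. On a synchronization over an edge $\{p,p'\}$ the child supplies the shared state $\sigma_p^{\uparrow}$ and its view state $\sigma_p$, the parent supplies its view state $\sigma_{p'}$; the merged view state is recomputed (see below), the synchronizing letter is applied, and both processes reset their summaries to reflect that they now share everything, the parent retaining the shared state with \emph{its} own parent, which has not changed. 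For acceptance I would fold the views up the tree by the same pairwise merge, obtaining $\A(w)=\A\bigl(\bigcup_p \traceview{\{p\}}{w}\bigr)$ since the union of all views is the whole trace; the accepting set $\Acc$ is then the (construction-time computable) set of tuples whose fold lies in $F$.

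The only real obstacle is the synchronization update: reconstructing $\A$ on the merged view from the three states $s=\sigma_p^{\uparrow}$, $u=\sigma_p$, $v=\sigma_{p'}$. Writing the child's unshared contribution as $f\in M$ and the parent's as $g\in M'$, where $M,M'$ are the transformation monoids generated by the letters supported on the two sides of the edge, Paragraph~1 gives that $M$ and $M'$ commute. At first sight the merged state $f(g(s))$ seems to require the full transformation $f$ or $g$ (of which there are $n^n$), not merely the endpoints $u=f(s)$ and $v=g(s)$; this is exactly where I expect the difficulty to sit. The resolution is a short lemma: if $f_1,f_2\in M$ agree at $s$ and $g_1,g_2\in M'$ agree at $s$, then
\[
 f_1(g_1(s))=g_1(f_1(s))=g_1(f_2(s))=f_2(g_1(s))=f_2(g_2(s)),
\]
using commutation together with the two agreements. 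Hence $f(g(s))$ depends only on $(s,u,v)$, so the merged state is a well-defined function of the stored pair-data that can be tabulated once from $\A$ and $\arch$. This is the key step, and it is precisely the diamond hypothesis that turns an otherwise impossible reconstruction into a computable one. The remaining work is routine: checking that the invariants defining $\sigma_p$ and $\sigma_p^{\uparrow}$ are preserved by every transition, and that the fold used for $\Acc$ is well defined by iterating the same lemma.
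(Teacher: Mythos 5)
Your proposal is correct and follows essentially the same route as the paper: your per-process pair $(\sigma_p^{\uparrow}(w),\sigma_p(w))$ is exactly the paper's state space $S_p=S\times S$ with the invariants $\stateparentview{p}{w}$ and $\stateview{\{p\}}{w}$, your commutation lemma (that $f(g(s))$ depends only on $(s,u,v)$ and the side alphabets) is precisely the $\diam$ function of Lemma~\ref{lem:diam} from \cite{CMZ93}, and your acceptance fold up the tree is the paper's $\diamtree$ recursion. Like the paper, you correctly defer the general $O(4^{|\Proc|^4}n^{|\Proc|^2})$ bound to the time-stamping construction of \cite{Zielonka87,GenestGMW10}.
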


The size of of an \kl{AA} $\mathcal{B}$ (written as $|\mathcal{B}|$) is defined as $\max_{p\in\Proc} |S_p|$.
This standard notion of size takes into account the maximal local memory (state space) of
a process, because an \kl{AA} is a \emph{distributed} device. 

\subsection{Tree-like communication architectures}

As before, let $\Proc$ be a set of processes and $\Sigma$ an alphabet, both finite.
A communication architecture is tree-like if there is a tree that spans
letters that synchronize more than two processes.
Formally, we have the following.

Let $|\Proc| = n$.
A \emph{tree} over $\Proc$ is a  
$\tree = (\Proc,r,E)$ 
where $\Proc$ is the set of nodes, of which $r
\in \Proc$ is the \emph{root}, and $E \subseteq \Proc^2$ is the set of
\emph{edges} that is cycle free and connected.

\begin{definition}\label{def:TCA}
We say that $\Arch = (\CA,\tree)$, where $\CA$ is a communication
architecture and $\tree$ is a tree, is a \intro{tree-like} \kl{communication
architecture} (short: \kl{TCA}) if
\begin{enumerate}
\item\label{item:channelconnectivity} for all letters $a\in\Sigma$, the
set $\arch(a)$ is connected in $T$, i.e. if $p,q \in \arch(a)$, then
all processes along the path from $p$ to $q$ in $T$ are also in
$\arch(a)$,
and
\item\label{item:edgechannel} if $(p,q) \in E$, then there is a letter 
$a \in \Sigma$ such that $p,q \in \arch(a)$.
\end{enumerate}
\end{definition}
Condition~\ref{item:channelconnectivity} ensures that a communication on a given letter is always ``local'' in the tree.
Condition~\ref{item:edgechannel} ensures that the tree does not consist of disconnected parts.
In case that condition~2 does not hold we say that $\Arch$ is \emph{forest-like}.

\begin{example}
Fix sets $\Proc = \{p_1,\dots,p_5\}$ of processes and $\Sigma = \{a_1, a_2, a_3\}$ of letters.
Then $\Arch = ((\arch(a_1), \arch(a_2), \arch(a_3)), \tree)$ as given
in Figure~\ref{fig:archi} is a \kl{tree-like} communication architecture
rooted in $p_1$:
\begin{enumerate}
\item Every letter is connected in $\tree$.
\item All edges in $\tree$ are covered by at least one letter.
\end{enumerate}

\end{example}

\begin{figure}[bt]
\tikzset{every state/.style={minimum size=15pt}}
\begin{center}
  \begin{tikzpicture}[
		auto,
    node distance=1.5cm,
    semithick
    ]
    \node[state] (1) {$p_1$};
    \node[state,below left = of 1] (2) {$p_2$};
    \node[state,below right = of 1] (3) {$p_3$};
    \node[state,below left = of 3] (4) {$p_4$};
    \node[state,below right = of 3] (5) {$p_5$};


    \path[->] (1) edge (2);
    \path[->] (1) edge (3);
    \path[->] (3) edge (4);
    \path[->] (3) edge (5);

    \draw[rounded corners,color=red] ($(1.north)+(0,0.4)$) -- ($(2.west)+(-0.4,0)$) -- ($(2.south)+(0,-0.4)$) -- ($(1.south)+(0,-0.4)$) -- ($(3.south)+(0,-0.4)$) -- ($(3.east)+(0.4,0)$) -- cycle node[above,color=red] {$a_1$};
    \draw[rounded corners,color=blue] ($(1.west)+(-0.3,0)$) -- ($(3.west)+(-0.3,0)$) -- ($(4.west)+(-0.3,0)$) node[left,color=blue] {$a_2$} -- ($(4.south)+(0,-0.3)$) -- ($(3.east)+(0.3,0)$) -- ($(1.north)+(0,0.3)$) -- cycle;
    \draw[rounded corners,color=teal] ($(3.north)+(0,0.2)$) -- ($(3.west)+(-0.2,0)$) -- ($(5.south)+(0,-0.2)$) -- ($(5.east)+(0.2,0)$) node[right,color=teal] {$a_3$} -- cycle;
  \end{tikzpicture}
\caption{A tree-like communication architecture: the  tree is
given by the black edges, while the communication architecture is drawn
with one color group for each letter. 
}\label{fig:archi}
\end{center}
\vspace*{-1cm}
\end{figure}
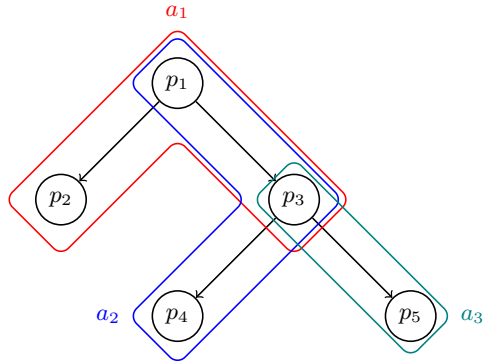

Notions of parents, children, descendants, leaves, neighbors, and subtrees are defined as usual.

\section{Asynchronous Automata with Tree-like Architectures}
\label{sec:async automata}

Hausmann et al.~\cite{HausmannLP24} showed that \emph{reconfigurable asynchronous automata} can be distributed when they communicate over tree-like architectures.
Reconfigurable asynchronous automata change their communication architectures during the operation. 
Thus, they are more general than asynchronous automata.
Here, we present their construction for the case of a \emph{fixed} \kl{tree-like} architecture.
Equivalently, we show that the construction of Krishna and
Muscholl~\cite{KrishnaM13} for tree architectures extends, with more complicated
transitions, to
tree-like architectures.


We extend the definition of independence to sequences and to sets of letters.
First we set $C_1,C_2\subseteq \Alp$ as \kl{independent} (and write ($C_1,C_2)
\in I(\CA)$) if for every $a_1\in C_1$ and $a_2\in C_2$ we have $(a_1,a_2)\in I(\CA)$.
Equivalently, $C_1\times C_2 \subseteq I(\CA)$. 
We set $w_1,w_2\in\Alp^*$ as \kl{independent} (and write $(w_1,w_2) \in I(\CA)$) if there exist sets $C_1,C_2\subseteq \Alp$ such that $w_1\in C_1^*$, $w_2\in C_2^*$, and $(C_1,C_2)\in I(\CA)$. 

Fix a DFA $\A = (\Alp, S, \Delta, s_0,F)$ and  
$\Arch = (\CA,\tree)$ such that $\A$ is $I(\CA)$-diamond.
If $w_1$ and $w_2$ are two $I(\CA)$-\kl{independent} words, then we have that $\Delta(s, w_1 w_2) = \Delta(s, w_2 w_1) = \Delta(s, w)$ for any $w$ that is an interleaving of $w_1$ and $w_2$.
This can be verified by induction on the length of $w$.

We now study what information is needed in order to compute $\Delta(s,w_1 w_2)$ without memorizing $w_1$ and $w_2$ themselves.


Fix a state $s$ of $\A$ and let $w_1$ and $w_2$ be two $I(\CA)$-\kl{independent} words.
Let $s_1 = \Delta(s,w_1)$, $s_2 = \Delta(s,w_2)$, and let $C_1,C_2$ be sets such that $w_1\in C_1^*$, $w_2\in C_2^*$ and $(C_1,C_2)\in I(\CA)$.
The next lemma shows that the state $\Delta(s,w_1w_2)$ can be computed knowing
only $s$, $s_1$, $s_2$, and $C_2$ (but not the exact words $w_1,w_2$).

\begin{lemma}[\cite{CMZ93}]\label{lem:diam}
Let $\A$ be an $I(\CA)$-\kl{diamond} DFA. There exists a (partially-defined) function \intro{$\diam$} $: S^3 \times 2^{\Alp} \to S$ such that if $s, s_1, w_1, s_2, w_2, C_2$ are all defined as described above, then $\diam(s,s_1,s_2,C_2) = \Delta(s,w_1w_2)$.
\end{lemma}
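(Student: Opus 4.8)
The plan is to define $\diam$ directly from the data it is allowed to inspect and then to prove that this definition does not depend on the particular words witnessing that data. For a tuple $(s,s_1,s_2,C_2) \in S^3 \times 2^{\Alp}$, I declare it to lie in the domain of $\diam$ exactly when there exist a set $C_1 \subseteq \Alp$ with $(C_1,C_2) \in I(\CA)$ and words $w_1 \in C_1^*$, $w_2 \in C_2^*$ such that $\Delta(s,w_1)=s_1$, $\Delta(s,w_2)=s_2$, and the run of $\A$ on $w_1 w_2$ from $s$ exists. For such a tuple I set $\diam(s,s_1,s_2,C_2) := \Delta(s,w_1 w_2)$. Once well-definedness is established, the required equation $\diam(s,s_1,s_2,C_2)=\Delta(s,w_1 w_2)$ holds by construction, so the entire content of the lemma is the claim that this value is independent of the chosen witnesses.

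The workhorse is a commuting identity. Since $w_1$ and $w_2$ are $I(\CA)$-independent, the observation stated just before the lemma gives $\Delta(s,w_1 w_2) = \Delta(s,w_2 w_1)$; unfolding the two sides yields
\[
\Delta(s,w_1 w_2) = \Delta(s_1,w_2) = \Delta(s_2,w_1),
\]
where I used $\Delta(s,w_1)=s_1$ on the left and $\Delta(s,w_2)=s_2$ on the right. More generally, whenever a word $w$ is $I(\CA)$-independent from $w_2 \in C_2^*$ and both $\Delta(s,w)$ and $\Delta(s,w_2)$ are defined, the same reordering gives $\Delta(\Delta(s,w_2),w) = \Delta(\Delta(s,w),w_2)$, with the diamond property guaranteeing that definedness is preserved when the two independent blocks are swapped.

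The main point, and the only real obstacle, is well-definedness: the value cannot change if we replace the witnesses $(C_1,w_1,w_2)$ by any other witnesses $(C_1',w_1',w_2')$ for the same tuple $(s,s_1,s_2,C_2)$. Note that $C_1$ is \emph{not} among the arguments, and that $w_2,w_2'$ need only agree in reaching $s_2$, not as words. I bridge the two through the single reference word $w_2$. By the commuting identity, $\Delta(s,w_1 w_2) = \Delta(s_2,w_1)$ and $\Delta(s,w_1' w_2') = \Delta(s_2,w_1')$, so it suffices to show $\Delta(s_2,w_1) = \Delta(s_2,w_1')$. Both $w_1 \in C_1^*$ and $w_1' \in C_1'^*$ are $I(\CA)$-independent from every letter of $w_2$, since $(C_1,C_2),(C_1',C_2)\in I(\CA)$ and $w_2 \in C_2^*$. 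Applying the commuting identity to $w_1$ and $w_2$, then to $w_1'$ and $w_2$, and using that $w_1$ and $w_1'$ send $s$ to the same state $s_1$, I obtain
\[
\Delta(s_2,w_1) = \Delta(s_1,w_2) = \Delta(s_2,w_1').
\]
This is precisely where the argument $C_2$ earns its place: it certifies that every admissible $w_1$ commutes past the fixed $w_2$, which collapses $\Delta(s_2,w_1)$ to the witness-independent quantity $\Delta(s_1,w_2)$.

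It remains only to treat partiality with care: since $\Delta(s,w_1)$, $\Delta(s,w_2)$, and $\Delta(s,w_1 w_2)$ are assumed defined, the diamond property propagates definedness through each swap of independent letters, so every intermediate expression above exists. With well-definedness in hand, $\diam$ is a genuine partial function satisfying the stated equation by construction, which completes the proof.
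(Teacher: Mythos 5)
Your proof is correct and follows essentially the same route as the paper's: both establish that $\Delta(s,w_1w_2)$ is independent of the chosen witnesses by commuting the independent blocks past one another and re-anchoring at the shared states $s_1$ and $s_2$ (your pivot chain $\Delta(s_2,w_1)=\Delta(s_1,w_2)=\Delta(s_2,w_1')$ is the paper's chain, just organized symmetrically around the reference word $w_2$). The only cosmetic differences are that the paper fixes the canonical maximal set $C_1'$ of letters independent of $C_2$ and proves invariance over $C_1'$-witnesses (which subsumes all smaller $C_1$), whereas you quantify over arbitrary witness triples $(C_1,w_1,w_2)$ and make the definedness of the combined run an explicit part of the domain of $\diam$ --- same argument, slightly different bookkeeping.
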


\begin{proof}
Let $C_1'$ be the maximal set of letters such that $(C_1',C_2)\in I(\CA)$. It
follows that $C_1 \subseteq C_1'$. 
Consider some arbitrary words $w_1'$ and $w_2'$ such that $w_1'\in C_1'^*$, $w_2' \in C_2^*$ and such that $s_i = \Delta(s,w_i)=\Delta(s, w_i')$ for $i \in \{1,2\}$.
Let us show that $\Delta(s,w'_1w_2')=\Delta(s,w_1w_2)$.

First, by diamond property and hypothesis we have that $s':=\Delta(s,w_1w_2) =
\Delta(s, w_2w_1)=\Delta(s_2,w_1)$.
We also know that $\Delta(s,w_2') = s_2$, so $s' = \Delta(s,w_2' w_1)$.
Then, reusing the diamond property: $s' = \Delta(s,w_2' w_1) = \Delta(s,w_1 w_2')$.
We conclude reusing the hypothesis: $s'=\Delta(s,w_1 w_2') = \Delta(s,w'_1w_2')$.

Finally we can set $\diam(s,s_1,s_2,C_2)$ as the state $\Delta(s,w'_1w_2')$,
for some $w'_1 \in C_1'^*,w'_2 \in C_2^*$.
\qed
\end{proof}

In \cite{KrishnaM13}, the $\diam$ function has been used for \kl{TCA} 
$\Arch = (\CA,\tree)$ with binary $\CA$ and $I(\CA)$-diamond DFA $\A$ in the following way.
On a communication between a parent and one of its children, let $s$ be the most
recent 
state of $\A$ known by both the parent and the child, $s_1$ and $s_2$ be the most recent states
known by the parent and the child respectively, and $C_2$ the set of letters
with domains in the subtree rooted in the child.
Then one can combine the information known by the parent and the child after the
communication by computing $\diam(s,s_1,s_2,C_2)$.
We do not provide more details because we show below the more general case of
non-binary $\CA$.

In our \kl{tree-like} setting, communications can include more than two participants.
To that end, we naturally extend the $\diam$ function to work on subtrees.
Let $\tree$ be a tree where each node is labeled by a pair $(s,t)$ of states and
a set $C^\downarrow$ of letters, with the intuition that $s$ is the most recent state of $\A$
known by both the process at this node  and its parent, $t$ is 
the most recent state of $\A$ known by
this process, and $C^\downarrow$ is the set of letters that can be
found in the subtree rooted in that node and that are \kl{independent} from those shared with the
parent.
Let $r$ be the root of $\tree$, labeled by $(s_r,t_r,C_r^\downarrow)$, and assume that $r$ has $k$ children labeled $(s_i,t_i,C_i^\downarrow)$ and leading to tree $\tree_i$ for $i \in \{1,\dots,k\}$.
\AP Then we define the function \intro{$\diamtree$} recursively in the following way:
\begin{align*}
\diamtree(\tree) =
\begin{cases}
t_r &\text{ if $k = 0$,}\\
\kl{\diam}(s_k, \diamtree(\tree \setminus \tree_k), \diamtree(\tree_k), C_k^\downarrow) &\text{ otherwise.}
\end{cases}
\end{align*}
Note that the $\mathrm{state}(p)$ function from \cite{KrishnaM13} is equivalent
to \kl{$\diamtree(\tree_p)$} where $\tree_p$ is the tree rooted in $p$.

Consider a letter $a\in \Alp$, where $\CA(a)=\{p_1,\ldots, p_k\}$. 
\AP We define the \intro{tree of $a$}, denoted by \reintro{$\tree_a$}, as the subtree of $\tree$ whose set of nodes is exactly $\CA(a)$; by definition of a \kl{TCA} this is indeed a tree.
Consider now a second letter $b \neq a$ and some $p$ such that $p \in \CA(a)$ but $p \notin \CA(b)$. 
Again by definition of a \kl{TCA}, $\CA(b)$ occupies a connected area of $\tree$, and this area does not contain $p$. 
Therefore, it must either lie "above" $p$, meaning that the (unique) path from $p$ to this area goes through the parent of $p$, or "below" $p$, meaning that it goes through one of $p$'s children.
We define $C_p^\downarrow$ to be the set of letters for which the second case holds.
Now assume that each $p_i \in \CA(a)$ is associated with a pair of states
$(s_i,t_i)$ of $\A$.
Then we label each node $p_i$ of $\kl{\tree_a}$ by $(s_i,t_i,C_{p_i}^\downarrow)$ and define the state $s_a = \diamtree(\tree_a)$.
Intuitively, that means we combine the information of all nodes in $\tree_a$ to compute the most up-to-date state $s_a$.

We are now ready to state our construction.
We distribute $\A = (\Alp, S, \Delta, s_0, F)$ by defining $\mathcal{B} = ((S_p)_{p \in \Proc}, (s^0_p)_{p \in \Proc},(\delta_a)_{a \in \Alp},\Acc)$.
Namely, for each $p\in \Proc$ we define the set $S_p$ and its initial state
$s^0_p$, the  transition function $\delta_a$ for each $a\in \Alp$ and the
acceptance set $\Acc$. 
For each $p\in \Proc$ we set $S_p=S\times S$ and $s^0_p=(s_0,s_0)$.

For $a\in \Alp$ we describe now the transition $\delta_a: \prod_{p\in \CA(a)} S_p \rightarrow \prod_{p\in \CA(a)} S_p$.
The transition computes the diamond closure over the states held by all other processes in the correct order. 
Based on this computation each process updates its state.

Let $\CA(a)=\{p_1,\ldots, p_k\}$ and for all $1 \leq i \leq k$ let $(s_i,t_i)$ be the state of $p_i$.
Let $s_a = \kl{\diamtree}(\kl{\tree_a})$ as described earlier, and let $s' = \Delta(s_a,a)$.
We set 
$$\delta_a((s_1,t_1), \dots, (s_k,t_k))=((u_1,v_1), \dots, (u_k,v_k))$$
where the new state of process $p_i$ is $u_i=\ifroot{a}{p_i}{s_i}{s'}$ and $v_i=s'$ with 
\[
\ifroot{a}{p}{s}{t} = 
\left \{
\begin{array}{l r}
		s &\text{ if $p$ is the root of \kl{$\tree_a$},}\\
		t &\text{ otherwise.}
\end{array}
\right .
\]
If $\Delta(s_a,a)$ is undefined, then so is $\delta_a$.

We now define $\Acc$.
Consider a global state $\textbf{s}=((s_1,t_1),\ldots (s_n,t_n))$.
Recall that the \kl{TCA} is $\Arch = (\CA,\tree)$.
For the tree $\tree$, consider the labeling of every node $i$ in $\tree$ by the pair $(s_i,t_i)$ and the set $C_i^\downarrow$ of letters found "below" node $i$, as defined earlier.
We then set $\textbf{s}\in \Acc$ if and only if $\diamtree(\tree)\in F$.

We state our main result below. 

\begin{theorem}\label{thm:correctness}
If $\A$ is $\CA$-\kl{diamond} and $\CA$ is \kl{tree-like}, then $\mathcal{B}$
\kl{recognizes distributively} $\Lang(\A)$.
The size of $\mathcal{B}$ is in $O(n^2)$, where $n$ is the number
of states of $\A$.
\end{theorem}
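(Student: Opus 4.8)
The size bound is immediate from the construction: every process carries $S_p = S\times S$, so $|S_p| = n^2$ and $|\mathcal{B}| = \max_{p}|S_p| = O(n^2)$. The substance of the theorem is thus the language equality $\Lang(\mathcal{B}) = \Lang(\A)$, and the plan is to prove it through a single state-tracking invariant that pins down what each local pair $(s_p,t_p)$ means in terms of the runs of $\A$ on the \emph{views} of the trace read so far.

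First I would fix, for a word $w$ and a process $p$, the view $\traceview{\{p\}}{w}$ (the causal past of $p$ in $[w]_{I(\CA)}$) and the \emph{shared parent view} $\traceparentview{p}{w}$ (the common knowledge of $p$ and its parent, frozen at their last synchronisation); for a connected set $V$ I take $\traceview{V}{w} = \bigcup_{p\in V}\traceview{\{p\}}{w}$. Since $\A$ is $I(\CA)$-\kl{diamond}, the states $\stateview{V}{w} := \Delta(s_0,\traceview{V}{w})$ and $\stateparentview{p}{w} := \Delta(s_0,\traceparentview{p}{w})$ are well defined independently of the chosen linearisation. The invariant I would prove, by induction on $|w|$, is: \emph{every run of $\mathcal{B}$ on $w$ reaches the global state whose $p$-component is $(s_p,t_p) = (\stateparentview{p}{w},\,\stateview{\{p\}}{w})$.} The base case $w=\varepsilon$ is clear, as all views are empty and every component equals $s_0$.

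The engine of the inductive step — and the place I expect the real work — is an auxiliary lemma, a purely view-theoretic fact about $\diamtree$ proved by induction on the size of a connected subtree $\tree'$ of $\tree$ whose every node $p$ is labelled by $(\stateparentview{p}{w},\stateview{\{p\}}{w})$: namely $\diamtree(\tree') = \stateview{V'}{w}$, where $V'$ is the node set of $\tree'$. In the recursive case one peels off the last child subtree $\tree_k$; its label $s_k$ is $\stateparentview{p_k}{w}$, and by the inner induction hypothesis $\diamtree(\tree'\setminus\tree_k)$ and $\diamtree(\tree_k)$ are $\stateview{V'\setminus V_k}{w}$ and $\stateview{V_k}{w}$. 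The crux is that both of these views extend the shared view $\traceparentview{p_k}{w}$ by \kl{independent} words: every event in $V_k$ that is new since the last $p_k$--parent synchronisation involves only processes inside $V_k$ — any letter joining the two sides of the $p_k$--parent edge would, by connectivity of its domain (condition~\ref{item:channelconnectivity} of a \kl{TCA}), contain both endpoints and therefore already belong to the shared view — and these letters are exactly those collected in $C_k^\downarrow$, while symmetrically the increment on the rest of $\tree'$ uses only letters with domains disjoint from $V_k$. Feeding this into Lemma~\ref{lem:diam} gives $\diam(s_k,\diamtree(\tree'\setminus\tree_k),\diamtree(\tree_k),C_k^\downarrow) = \stateview{V'}{w}$, closing the inner induction with no circularity, since the lemma only reasons about $\diamtree$ given correct labels.

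With the subtree lemma available the rest is bookkeeping. Applied to $\tree_a$ it yields $s_a = \stateview{\CA(a)}{w}$, so $s' = \Delta(s_a,a) = \stateview{\{p_i\}}{wa}$ for each participant, because synchronising on $a$ merges all participant views into $\traceview{\CA(a)}{w}$ and appends the new $a$-event; this matches $v_i = s'$. For the first components, the root of $\tree_a$ keeps its old $s_i$ since its parent lies outside $\CA(a)$ and its shared view is untouched, whereas every other participant's parent lies inside the connected set $\CA(a)$, so its shared view becomes the post-$a$ common view $s'$; this matches $u_i = \ifroot{a}{p_i}{s_i}{s'}$, and processes outside $\CA(a)$ keep both their states and their views. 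This re-establishes the invariant for $wa$. Finally, instantiating the subtree lemma at the full tree gives $\diamtree(\tree) = \stateview{\Proc}{w} = \Delta(s_0,w)$, so the acceptance test ``$\diamtree(\tree)\in F$'' holds exactly when $w\in\Lang(\A)$, which is the desired equality. The one genuinely delicate point is the independence argument inside the subtree lemma, i.e.\ turning the tree-like connectivity of letter domains into honest trace independence of the two view-increments; everything else is induction and unwinding of definitions.
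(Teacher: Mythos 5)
Your proposal takes essentially the same route as the paper's own proof: the same two per-process invariants ($s_p$ tracks the shared parent view, $t_p$ tracks the process view), the same key subtree lemma stating that $\diamtree(\tree')$ equals the view-state of a correctly labelled connected subtree (the paper's Lemma~2, proved by the same peel-off-the-last-child induction whose crux is exactly the independence of the two view increments beyond the last parent--child synchronisation, justified via connectivity of letter domains and fed into Lemma~1), and the same inductive step and acceptance argument via the full tree. The only point the paper makes explicit that you leave implicit is a third invariant $\invdef{w}$ --- the run of $\mathcal{B}$ on $w$ is defined iff the run of $\A$ is, which is needed for words on which one of the runs fails to exist --- but this follows from the same machinery (the subtree lemma gives $s_a = \diamtree(\tree_a)$, and the diamond property makes definedness of $\Delta(\cdot,a)$ depend only on the view of $\CA(a)$), so the difference is presentational rather than substantive.
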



\subsection{Proof of correctness}\label{sec:invariants}
This section is dedicated to the proof of Theorem~\ref{thm:correctness}.

First, and completely independently from the construction, we need to define
what is the view  of a (set of) process(es), as in \cite{KrishnaM13},
and then show a useful property relating \kl{$\diamtree$} and those views.
Let $\A$ be a $\CA$-diamond DFA, $w \in \Alp^\ast$  a word, $\rho = s_0
\xrightarrow{a_1} \dots \xrightarrow{a_k} s_k$ the initial run of $\A$ on $w=a_1 \dots
a_k$,  and
$X \subseteq \Proc$ a set of processes.
As processes in $X$ are not necessarily part of every communication in $w$,
there may be some parts of $\rho$ that are happening outside of $X$'s knowledge
and processes in $X$ may not know the state reached at the end of $\rho$
just by themselves.
\AP We define recursively the \intro{view} 
of $X$ on $w$, which intuitively corresponds to
the subword of $w$ that processes in $X$ can see through shared actions,
and denote it by
\reintro{$\traceview{X}{w}$}: 
\begin{align*}
&\traceview{X}{\varepsilon} = \varepsilon\\
&\traceview{X}{w \cdot a_k} =
\begin{cases}
\traceview{X \cup \CA(a_k)}{w} \cdot a_k &\text{if $X \cap \CA(a_k) \neq \emptyset$,}\\
\traceview{X}{w} &\text{otherwise.}
\end{cases}
\end{align*}
We compute what is also known as the causal past of processes in $X$:
We start from the last communication in which at least one process in
$X$ participated. Then, the $\traceview{X}{w}$ computes backwards the subsequence of $w$
seen by processes in $X$ through shared actions.
Abusing notations, we identify any tree $\tree = (\Proc,r,E)$ with its set of
processes $\Proc$ and write \kl{$\traceview{\tree}{w}$} instead of
$\traceview{\Proc}{w}$.
\AP We also define the state of $\A$ that those processes can compute as $\intro{\stateview{X}{w}} = \Delta(s_0,\traceview{X}{w})$.
Furthermore, with $\A$ being $\CA$-diamond, we can deduce that for any letter
$a$, $\Delta(s_k,a)$ is defined if and only if $\Delta(s_0,
\traceview{\arch(a)}{w}a)$ is defined.
That is, whether a communication with letter $a$ can happen cannot depend on communications made outside of the \kl{view} of those listening to $a$.

\AP Let  $p \in \Proc$ be some process with parent $q$ in $\tree$. The \intro{shared
parent view} of $p$ on $w$, 
denoted by \reintro{$\traceparentview{p}{w}$}, is defined as  \kl{$\traceview{\{p\}}{w'}$} where $w'$ is
the minimal prefix of $w$ containing all occurrences of all letters
$b$ such that $\{p,q\} \subseteq \CA(b)$.
In other words this is the sequence of actions, as seen from $p$'s point of view,
until the last communication involving both $p$ and its parent.
This is not necessarily the same as \kl{$\traceview{\{p\}}{w}$}, because $p$ may have 
communicated with its children after the end of $w'$.
For similar reasons, \kl{$\traceparentview{p}{w}$} is not necessarily the same as \kl{$\traceview{\{q\}}{w}$}.
If $p$ is the root of $\tree$, we set $\traceparentview{p}{w} = \varepsilon$ for all $w$.
\AP Finally, we set $\intro{\stateparentview{p}{w}} =
\Delta(s_0,\kl{\traceparentview{p}{w}})$ to be the state reached in $\A$ after
reading that sequence.

The \kl{$\diam$} function, and by extension \kl{$\diamtree$}, interplay nicely with the views of independent parts of the system.
More specifically, we show that if each node in $\tree$ is correctly labeled, then $\diamtree(\tree)$ correctly computes the view according to every process in $\tree$.

\begin{lemma}\label{lemma:diamview}
For all words $w$ 
and for all subtrees $\tree'$ of $\tree$ where each node $p$ of $\tree'$ is
labeled by $(\kl{\stateparentview{p}{w}}, \stateview{\{p\}}{w}, C_p^\downarrow)$, we
have $\kl{\diamtree}(\tree') = \stateview{\tree'}{w}$.
\end{lemma}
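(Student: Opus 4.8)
The plan is to prove the statement by induction on the number of nodes of $\tree'$, following exactly the recursive shape of $\diamtree$. For the base case $\tree'$ is a single node $p$, so $\diamtree(\tree') = \stateview{\{p\}}{w}$ by definition (the $t$-component of the label), which is precisely $\stateview{\tree'}{w}$. Note that the recursion never inspects the $s$-component of a root's label, so it is harmless that the root of $\tree'$ carries its shared-parent-view state; in particular the hypothesis can be applied to any subtree without worrying about its root label.

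For the inductive step, let $r'$ be the root of $\tree'$, let $\tree_k$ (rooted at the child $p_k$ of $r'$) be the last child subtree peeled off by the recursion, and set $\tree'' = \tree' \setminus \tree_k$. Both $\tree''$ and $\tree_k$ are subtrees of $\tree$ carrying the same node labels, so the induction hypothesis gives $\diamtree(\tree'') = \stateview{\tree''}{w}$ and $\diamtree(\tree_k) = \stateview{\tree_k}{w}$. By the definition of $\diamtree$ we have $\diamtree(\tree') = \diam(\stateparentview{p_k}{w}, \stateview{\tree''}{w}, \stateview{\tree_k}{w}, C_{p_k}^\downarrow)$, so it remains to show this equals $\stateview{\tree'}{w}$, and for that I would invoke Lemma~\ref{lem:diam}.

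The heart of the argument is a decomposition of views at the level of traces. First I would use the connectivity condition of Definition~\ref{def:TCA}: any letter whose domain meets both $\tree_k$ and its complement must contain the entire tree-path between the two sides, hence both $p_k$ and $r'$; so all communication across the cut between $p_k$ and $r'$ uses letters $b$ with $\{p_k,r'\}\subseteq\CA(b)$, and after the last occurrence of such a letter the subtree $\tree_k$ and the rest of the system have disjoint sets of participating processes. Taking the causal past $\traceparentview{p_k}{w}$ of that last communication as the common base, I would then establish: (i) both $\traceview{\tree''}{w}$ and $\traceview{\tree_k}{w}$ are, up to $\sim_{I(\CA)}$, equal to $\traceparentview{p_k}{w}$ followed by residual words $w_1$ and $w_2$; (ii) every letter of $w_2$ has its domain inside the subtree rooted at $p_k$, which is exactly what $C_{p_k}^\downarrow$ records, while every letter of $w_1$ has domain disjoint from $\tree_k$, so that with $C_2 = C_{p_k}^\downarrow$ and $C_1 = \{a \mid \CA(a)\cap\tree_k = \emptyset\}$ we get $(C_1,C_2)\in I(\CA)$; and (iii) $\traceview{\tree'}{w}$ equals, up to $\sim_{I(\CA)}$, the base followed by $w_1 w_2$, using that the causal past of $\tree' = \tree''\cup\tree_k$ is the union of the two causal pasts together with the diamond property. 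With $s = \stateparentview{p_k}{w}$, $s_1 = \Delta(s,w_1) = \stateview{\tree''}{w}$, $s_2 = \Delta(s,w_2) = \stateview{\tree_k}{w}$, Lemma~\ref{lem:diam} then yields $\diam(s,s_1,s_2,C_{p_k}^\downarrow) = \Delta(s,w_1 w_2) = \stateview{\tree'}{w}$, closing the induction.

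I expect the main obstacle to be item~(iii) of the decomposition and the trace-level bookkeeping around it: one must show that the residual of $\tree'$'s view past the last $p_k$--$r'$ communication is exactly an interleaving of the two side-residuals with no cross-dependences, and that $\traceparentview{p_k}{w}$ is genuinely a common trace-prefix of \emph{both} side-views (not merely of $\tree_k$'s). This should follow from the connectivity of $\CA(a)$ in $\tree$, but it requires care with events of $\tree'$'s view that lie outside $\tree'$ itself — for instance communications above $r'$ — which must be shown to contribute only to $w_1$, and with the exact content of $C_{p_k}^\downarrow$, namely that it contains every letter occurring in $w_2$.
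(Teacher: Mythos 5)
Your proposal follows essentially the same route as the paper's own proof: induction along the recursive structure of $\diamtree$, decomposing the views of the two parts as $\traceparentview{p_k}{w}$ followed by independent residual words (using the TCA connectivity condition to confine all cross-cut communication to letters shared by $p_k$ and its parent), and concluding via Lemma~\ref{lem:diam}. The only differences are organizational — the paper phrases the induction as an inner induction $(P_i)$ over the children rather than on node count, and it asserts the prefix/interleaving/independence facts you flag as the "main obstacle" with the same brevity you do — so there is no substantive divergence.
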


\begin{proof}
By induction on the structure of $\tree'$, same as in \cite{KrishnaM13}. 

If $\tree'$ is a single node $p$ labeled by $(s_p,t_p,C_p^\downarrow)$ then $\diamtree(\tree') = t_p = \kl{\stateview{\{p\}}{w}}$ by assumption.
Otherwise, assume $\tree'$ is made of a root $p$ labeled by $(s_p,
t_p,C_p^\downarrow)$ with $k$ children $p_1,\dots, p_k$  labeled by $(s_1,t_1,C_1^\downarrow),
\dots, (s_k,t_k,C_k^\downarrow)$ and leading to subtrees $\tree_1, \dots,
\tree_k$, respectively.
Assume that the property holds on those subtrees, i.e. that for all $i \leq k$,
we have $\diamtree(\tree_i) = \stateview{\tree_i}{w}$.
Then with $\tree_i'$ denoting the subtree of $\tree'$ composed of $p$, its
first $i$ children, and their respective descendants, we recursively show that
\[
(P_i)\qquad \qquad  \diamtree(\tree_i') = \stateview{\tree_i'}{w} 
\]
For the initial step $i = 0$, $\diamtree(\tree_0') = t_p = \stateview{\{p\}}{w}$ by assumption.
Now suppose $(P_{i-1})$ holds for $1 \leq i \leq k$.
By construction 
$$\diamtree(\tree_i') = \diam(s_i,\diamtree(\tree_{i-1}'), \diamtree(\tree_i), C_i^\downarrow).$$ 
By $(P_{i-1})$ we know that  $\diamtree(\tree_{i-1}') = \stateview{\tree_{i-1}'}{w}$, and by assumption we have that $s_i= \stateparentview{p_i}{w}$ and $\diamtree(\tree_i) = \stateview{\tree_i}{w}$.

Let $w_0 =$ \kl{$\traceparentview{p_i}{w}$}, $w_i = \traceview{\tree_i}{w}$, and $w_{i-1} = \traceview{\tree_{i-1}'}{w}$.
By definition, $w_0$ is a prefix of both $w_i$ and $w_{i-1}$.
Thus, let $w_i = w_0 \cdot w_i'$ and $w_{i-1} = w_0 \cdot w_{i-1}'$.
As $w_0$ contains the last communication involving $p_i$ and its parent $p$, it
is easy to see that $w_i'$ and $w_{i-1}'$ are independent, otherwise $w_0$ would
not be their longest common prefix.
Note also that $\traceview{\tree'_i}{w}=w_0 v$, where $v$ is an
interleaving of $w_i'$ and $w_{i-1}'$.
Also, all actions of $w_i'$ are in $C_{p_i}^\downarrow$, otherwise they would involve $p$.
Using the definition of \kl{$\diamtree$}, we get
$\diamtree(\tree'_i)=\diam(s_i,\stateview{\tree_{i-1}'}{w},
\stateview{\tree_i}{w},C_i^\downarrow)$ with $s_i=\stateparentview{p_i}{w}=\Delta(s_0,w_0)$,
$\stateview{\tree_{i-1}'}{w}=\Delta(s_i,w_{i-1}')$ and
$\stateview{\tree_i}{w}=\Delta(s_i,w_i')$. Thus, by Lemma~\ref{lem:diam}, $\diamtree(\tree'_i)=\stateview{\tree'_i}{w}$.

We have shown that $(P_i)$ holds for all $i$, and therefore for $i = k$, and as $\tree'= \tree_k'$ we have $\diamtree(\tree') = \stateview{\tree'}{w}$.
\qed
\end{proof}

Note that this lemma implies that if $\tree'$ is the full tree $\tree$ and is correctly labeled, then $\diamtree(\tree) = \stateview{\Proc}{w} = \Delta(s_0,w)$.
We will use this property to prove the correctness of our construction.

Let us define invariants that should be satisfied throughout a run.
Let $w\in\Alp^*$.
Our first invariant states that a run $\rho'$ of $\mathcal{B}$ on $w$ exists if and only if a run $\rho$ of $\A$ on the same word $w$ exists.
\[
\invdef{w}: \text{The run $\rho$ on $w$ is defined $\Leftrightarrow$ The run $\rho'$ on $w$ is defined.}
\]
Now if both runs are undefined, then $w$ is trivially rejected by both $\A$ and $\mathcal{B}$ and there is nothing left to show.
So for the remaining invariants, assume that both $\rho$ and $\rho'$ are defined and end in states $s$ and $\mathbf{s} = ((s_1, t_1), \dots, (s_n,t_n))$ respectively.
Then we establish the invariants relating the actual state of $\rho$ with the pair of states that each process keeps in its local state, which corresponds to the intuition given in the previous section.
\begin{align*}
&\invstateparent{w}: \quad \forall p \in \Proc.\, s_p = \kl{\stateparentview{p}{w}}
&\invstate{w}: \quad  \forall p \in \Proc.\, t_p = \kl{\stateview{\{p\}}{w}}
\end{align*}
Those two invariants, used with Lemma~\ref{lemma:diamview}, are key to prove
that the languages of $\A$ and $\mathcal{B}$ are equivalent.
Let us now prove that those invariants hold inductively.

\medskip

\noindent \emph{Initialization of the invariants.}
All invariants hold on the empty word $w = \varepsilon$.
$\invdef{\varepsilon}$ is trivial.
$\invstateparent{\varepsilon}, \invstate{\varepsilon}$ are easily derived from
the definition of the initial state $\mathbf{s}_0$ of ${\cal B}$.

\medskip

\noindent \emph{Invariants after a transition.}
Now consider a word of the form $w' = w \cdot a$, with both runs $\rho$ and $\rho'$ defined up to $w$ ending in states $s$ and $\mathbf{s}$ respectively.
Assume that all invariants hold in $w$.
We show that they still hold in $w'$.

For $\invdef{w'}$, by using $\invstateparent{w}$, $\invstate{w}$, and Lemma~\ref{lemma:diamview}, we have that $s' := \diamtree(\tree_a) = \stateview{\CA(a)}{w}$.
Therefore, $\Delta(s,a)$ is defined if and only if $\Delta(s',a)$ is defined.
By definition of $\mathcal{B}$, there is an $a$ transition from $\mathbf{s}$ iff $\Delta(s',a)$ is defined iff $\Delta(s,a)$ is defined iff there is an $a$ transition in $\A$.
Thus $\invdef{w'}$ holds.

Now assume both runs are defined for $w'$ and let $s \xrightarrow{a} s'$ in $\A$ and $\mathbf{s} \xrightarrow{a} \mathbf{s'}$ in $\mathcal{B}$.
Let $p \in \Proc$, its state in $\mathbf{s}$ be $(s_p,t_p)$, and its state in $\mathbf{s'}$ be $(s_p',t_p')$.
If $p$ was not part of the communication on $a$, then its state is unchanged and
by definition $\traceparentview{p}{w'}= $ \kl{$\traceparentview{p}{w}$} and
$\traceview{\{p\}}{w'} =$ \kl{$\traceview{\{p\}}{w}$}, so $\invstateparent{w'}$ and
$\invstate{w'}$ are trivially obtained from $\invstateparent{w}$ and
$\invstate{w}$ respectively.

Assume now that $p$ was part of this communication.
Let us first show $\invstate{w'}$.
By definition, $\traceview{\{p\}}{w'} =\traceview{\CA(a)}{w} \cdot a$.
Again, by combining $\invstate{w}$, $\invstateparent{w}$, and Lemma~\ref{lemma:diamview}, we get that $\stateview{\CA(a)}{w} = \diamtree(T_a) = s'$.
Thus $\stateview{\{p\}}{w'} =\Delta(s',a) = t_p'$ and $\invstate{w'}$ holds.
For $\invstateparent{w'}$, there are two cases.
First, assume that the parent of $p$ in $\tree$ is not part of the communication on $a$ (either because $p$ is the root and has no parent, or because its parent does not listen to $a$).
In that case, $\traceparentview{p}{w'} = \kl{\traceparentview{p}{w}}$ by definition and $\kl{\stateparentview{p}{w}}= s_p$ by $\invstateparent{w}$.
Moreover, $p$ must be the root of $T_a$.
Therefore $s_p' = \ifroot{a}{p}{s_p}{\Delta(s_a,a)} = s_p$ and $\invstateparent{w'}$ is satisfied.
Second, assume now that $p$'s parent, say $q$, is part of the communication on $a$.
Then by definition $\traceparentview{p}{w'} = \traceview{\{p\}}{w'} =
\traceview{\CA(a)}{w} \cdot a$. Thus,  $\kl{\stateparentview{p}{w'}} =\Delta(s',a)$ by the same proof as for
$\invstate{w'}$.
Thus $s_p' = \ifroot{a}{p}{s_p}{\Delta(s',a)} = \Delta(s',a)$ and we have
$\invstateparent{w'}$.

\medskip

\noindent \emph{Conclusion of the proof.}
We have shown that the invariants hold for any word $w$.
We use them to show that $\Lang(\A) = \Lang(\mathcal{B})$.
Let $w \in \Alp^*$ be a word, then
$w$ is accepted by $\A$ iff there is a run of $\A$ on $w$ ending in a state $s \in F$.
By $\invdef{w}$, the run of $\A$ on $w$ is defined iff the run of $\mathcal{B}$ on $w$ is defined.
If both runs are undefined, then $w$ is neither in $\Lang(\A)$ nor $\Lang(\mathcal{B})$.
Otherwise, by $\invstateparent{w}$, $\invstate{w}$, and Lemma~\ref{lemma:diamview}, we have that $\diamtree(T) = \traceview{\Proc}{w} = s$.
Then $\mathcal{B}$ accepts $w$ iff $\diamtree(T) = s \in F$ iff $\A$ accepts $w$.
Therefore $\Lang(\A) = \Lang(\mathcal{B})$, which concludes the proof of Theorem~\ref{thm:correctness}.

\subsection{Triangulated dependence alphabets}
\label{sec:monoid}

The complexity of Zielonka's general distribution construction stems from the
need to store information regarding knowledge about other processes and a
time-stamping mechanism that allows to put together these pieces of knowledge \cite{Zielonka87}.
Diekert and Muscholl have shown that in the case that the dependence relation
between the letters in $\Alp$ is a \emph{triangulated graph}, this complex
structure of information can be avoided~\cite{DiekertM96,DR95}.
 
We  observe that tree-like architectures give a fresh view on Diekert and
Muscholl's construction, and an \kl{AA} construction that is exponentially better.



Consider a graph $G=(V,E)$. A sequence of vertices $v_1,\ldots, v_n$ is a cycle if for every $i$ we have $(v_i,v_{i+1})\in E$ and $(v_n,v_1)\in E$.
A graph $G=(V,E)$ is \intro{triangulated} if for every cycle $v_1,\ldots, v_n$ such that $n>3$ there exists a pair $(v_i,v_j) \in E$ such that $1<|j-i|<n-1$.
That is, the cycle must have some chord.

Consider an architecture $\CA$ and let $D=D(\CA)$.
It turns out that an architecture is \kl{tree-like} if and only if the
dependency graph $G=(\Alp,D)$ is \kl{triangulated} and connected. 
It follows that if a dependency graph is triangulated but not connected then its
architecture is forest-like.

\begin{lemma}[\cite{Gavril74}]
    An architecture $\CA$ is tree-like iff $G_D=(\Alp,D)$ is \kl{triangulated} and connected.
    \label{lemma:tree-like iff trinagulaged}
\end{lemma}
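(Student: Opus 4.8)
The plan is to recognize $G_D=(\Alp,D)$ as the \emph{intersection graph} of the family $\{\CA(a)\}_{a\in\Alp}$ of vertex subsets of $\Proc$: by definition two letters $a,b$ are adjacent in $G_D$ exactly when $\CA(a)\cap\CA(b)\neq\emptyset$. Under a tree-like architecture $(\CA,\tree)$, Condition~\ref{item:channelconnectivity} says precisely that every $\CA(a)$ induces a connected subtree of $\tree$, so $G_D$ is the intersection graph of a family of subtrees of a tree. This is the exact hypothesis of Gavril's classical characterization (intersection graphs of subtrees of a tree are precisely the chordal, i.e.\ triangulated, graphs), which I would use as the backbone for both directions. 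Throughout I assume without loss of generality that $\Proc=\bigcup_{a\in\Alp}\CA(a)$, since a process belonging to no letter can never be attached by a covered edge.

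For the forward direction ($\Rightarrow$), assume $(\CA,\tree)$ is tree-like. Triangulatedness of $G_D$ is then Gavril's easy direction; if one prefers a self-contained argument, it follows from the Helly property of subtrees of a tree together with a shortest-cycle argument: a chordless cycle $a_1\cdots a_m$ with $m\ge 4$ would give subtrees $\CA(a_i)$ that meet consecutively but not otherwise, and tracing the unique tree-paths that realize the consecutive intersections forces two non-consecutive subtrees to share a node, i.e.\ a chord, a contradiction. For connectedness, given letters $a,b$ I would pick $p\in\CA(a)$, $q\in\CA(b)$ and the path $p=r_0,\dots,r_m=q$ in $\tree$; by Condition~\ref{item:edgechannel} each edge $(r_j,r_{j+1})$ is covered by some letter $c_j$, and since consecutive covering letters share the vertex $r_{j+1}$ while $c_0$ shares $p$ with $a$ and $c_{m-1}$ shares $q$ with $b$, the sequence $a,c_0,\dots,c_{m-1},b$ is a walk in $G_D$; hence $G_D$ is connected.

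For the backward direction ($\Leftarrow$) --- which I expect to be the main obstacle --- I would reconstruct an explicit tree on the fixed set $\Proc$. The tool is the clique-tree (junction-tree) representation of a connected chordal graph: there is a tree whose nodes are the maximal cliques of $G_D$ and in which, for every letter $a$, the cliques containing $a$ form a subtree. The key step is to bridge from cliques back to processes: by the Helly property each maximal clique of $G_D$ (a maximal pairwise-intersecting family of process sets) should be witnessed by a common process, so that maximal cliques can be represented by processes, and I would then transfer the clique-tree edges to a spanning tree $T$ on $\Proc$, routing the remaining processes and orienting from any chosen root. One then checks that each $\CA(a)$ becomes connected in $T$ (Condition~\ref{item:channelconnectivity}, from the junction-tree property) and that every tree edge is covered (Condition~\ref{item:edgechannel}, where connectedness of $G_D$ is exactly what rules out the forest-like case). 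The delicate point, and the heart of the proof, is precisely this passage from the abstract chordal structure of $G_D$ to a genuine tree on the prescribed processes $\Proc$ with the prescribed sets $\CA(a)$; establishing that the given family really embeds as subtrees of a single tree with all edges covered is where the Helly/conformality relationship between processes and maximal cliques, and the connectedness hypothesis, do all the work.
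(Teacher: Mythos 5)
Your forward direction is correct and is essentially the paper's route: $G_D$ is the intersection graph of the family $\{\CA(a)\}_{a\in\Alp}$, which under Condition~\ref{item:channelconnectivity} of Definition~\ref{def:TCA} consists of subtrees of $\tree$, so triangulatedness is Gavril's easy direction, and your covering-walk argument is a clean instantiation of the paper's ``connectivity is simple to add.'' The genuine gap is in the backward direction, at exactly the step you flag as the heart of the proof: the claim that, by the Helly property, every maximal clique of $G_D$ is witnessed by a common process. The Helly property holds for families of subtrees of a tree, but in the backward direction you do not yet know that $\{\CA(a)\}_a$ is such a family---that is the conclusion you are trying to reach, not a hypothesis; arbitrary set families with chordal intersection graphs need not satisfy it. Concretely, take $\Proc=\{1,2,3\}$, $\Alp=\{a,b,c\}$ with $\CA(a)=\{1,2\}$, $\CA(b)=\{2,3\}$, $\CA(c)=\{1,3\}$. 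Then $G_D$ is a triangle, hence triangulated and connected, but the maximal clique $\{a,b,c\}$ has no common process. Worse, this example refutes not just your bridge but the entire strict reading of the lemma that you adopt: any tree on three processes is a path, and whichever path one picks, one of the three two-element sets is disconnected in it, so no tree on the prescribed $\Proc$ with the prescribed sets $\CA(a)$ exists at all.

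The lemma must therefore be read the way the paper's remark after it reads Gavril (``\ldots represented as the graph of intersections of subtrees of an \emph{appropriate} tree''): it is a statement about realizability of the dependence relation, not about the originally given process set. If $G_D$ is triangulated and connected, then $D$ is the dependence relation of \emph{some} tree-like architecture, namely the one whose processes are the maximal cliques of $G_D$, with $\CA'(a)$ the set of maximal cliques containing $a$, and whose tree is a clique tree (junction tree) of $G_D$: the junction-tree property gives Condition~\ref{item:channelconnectivity}; connectedness of $G_D$ forces adjacent cliques of the clique tree to intersect (otherwise the two sides of that tree edge would split $G_D$ into parts with no edge between them), which gives Condition~\ref{item:edgechannel}; and $a,b$ share a process in $\CA'$ iff they lie in a common maximal clique iff $(a,b)\in D$. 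This realizability version is also all that the paper's application needs, since Theorem~\ref{theorem:automata for triangulated graphs} starts from a dependence alphabet and the architecture of the constructed AA may be chosen, after which Theorem~\ref{thm:correctness} applies. So keep your forward direction, but in the backward direction abandon the plan of transferring the clique tree onto $\Proc$---it cannot be carried out in general---and instead state and prove the clique-tree realization itself.
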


In Gavril's terms, given a tree, every graph of subtrees of the tree, where edges correspond to non-empty intersection of the subtrees, is triangulated. In the other direction, every triangulated graph can be represented as the graph of intersections of subtrees of an appropriate tree. 
The part about connectivity is simple to add. 

Note first that the construction of \kl{AA} 
for non-connected
dependency graphs easily reduces to the construction for connected dependency
graphs. To see this assume that 
$\Sigma$ is the disjoint union of two pairwise independent
alphabets $\Sigma_1, \Sigma_2$, so $(a,b) \notin D$ for every $a \in \Sigma_1, b\in \Sigma_2$. 
Consider some $I$-diamond DFA $\A$
over $\Sigma$ and an input word $w \in \Sigma^*$. We can use the diamond property (Lemma~\ref{lem:diam}) to infer from the states reached by $\A$ on the projection $w_1$ respectively $w_2$ of
   the input $w$ on  $\Sigma_1$ and $\Sigma_2$, resp., the state reached by $\A$ on $w \sim_I w_1 w_2$. More formally, assume 
that we constructed an \kl{AA} ${\cal B}_1$ over alphabet $\Sigma_1$ and an
\kl{AA} ${\cal
B}_2$ over alphabet $\Sigma_2$. For each of  ${\cal B}_1$, ${\cal B}_2$ we can suppose
that the global state reached on the respective input determines the state of
$\A$ reached on that word. We  obtain an \kl{AA} ${\cal B}$ by composing ${\cal
B}_1$, ${\cal
B}_2$ in parallel (because they share no letter). The global state reached by
${\cal B}$ on $w
\sim_I w_1 w_2$  determines the two
states $s_1=\Delta(s_0,w_1),s_2=\Delta(s_0,w_2)$ of $\A$, and
$\diam(s_0,s_1,s_2,\Sigma_2)$ says which global states are accepting.

The result we improve on in Theorem~\ref{thm:correctness} is stated in the next
theorem. The exponential size comes from the fact that the construction relies
on transition monoids.

\begin{theorem}[\cite{CMZ93,DiekertM96}]
    Given an $I$-\kl{diamond} DFA $\A$ over a 
    \kl{triangulated} dependency graph $G=(\Alp, D)$, an \kl{AA} 
    ${\cal S}$
    that recognizes the same language can be constructed with $|{\cal
    S}|=n!$, where
    $n=|\A|$.
    \label{theorem:automata for triangulated graphs}
\end{theorem}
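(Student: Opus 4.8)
The plan is to follow the classical transition-monoid construction that the present paper later supersedes. Let $M$ be the transition monoid of $\A$, i.e.\ the submonoid of partial maps $S \rightharpoonup S$ generated by the letter transitions $\{\Delta_a \mid a \in \Sigma\}$, where a word $u$ is sent to the partial function $s \mapsto \Delta(s,u)$. Because $\A$ is $I$-\kl{diamond}, independent letters commute in $M$, so the map from words to $M$ factors through the trace monoid $\mathbb{M}(\Sigma, I)$ and induces a well-defined morphism $\mu$ on traces. Acceptance of a word $w$ depends only on $\mu([w]_I)$ applied to $s_0$, so it suffices to build an \kl{AA} ${\cal S}$ whose global state reached on $w$ determines $\mu([w]_I)$ (equivalently the state $\Delta(s_0,w)$), and then to define $\Acc$ by checking membership in $F$ exactly as in the connectivity reduction above.

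Next I would exploit triangulation to avoid the time-stamping (gossip) machinery responsible for the tower-of-exponentials blow-up in the general case. By Lemma~\ref{lemma:tree-like iff trinagulaged} the \kl{triangulated} (and, w.l.o.g.\ connected) graph $G=(\Sigma,D)$ is realized by a \kl{tree-like} architecture $\Arch=(\CA,\tree)$; chordality equivalently furnishes a perfect elimination ordering of $G$, which I would use to organize the processes into a clique-tree. The point of this ordering is that whenever two processes sharing a letter synchronize, the events each has observed since their previous common synchronization are \kl{independent}, so their contributions can be recombined in a canonical order. Each process would store, as its local memory, a transition-monoid element together with a repetition-free enumeration (a linear order) of the states of $S$ recording the order in which state information was last refreshed; this enumeration plays the role of the global time-stamps of the general construction, and merging two processes amounts to interleaving their enumerations consistently and composing the associated partial maps, e.g.\ through \kl{$\diam$} (Lemma~\ref{lem:diam}).

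The size bound then follows by counting the admissible local memories: a process need only remember such a repetition-free ordering over the $n$ states of $\A$, of which there are at most $n!$ (up to the usual additive constant absorbed in the statement), giving $|{\cal S}| = n!$. Correctness would be established by induction on a linearization of the input trace, showing that after reading $w$ each process's stored map is the residue of $\A$ over the events in its causal past and that the canonical merge recovers $\Delta(s_0,w)$; the base case and the non-synchronizing transitions are immediate, while the synchronizing case is precisely the $\diam$/\kl{$\diamtree$} computation validated in Lemma~\ref{lemma:diamview}. \textbf{The main obstacle}, and the sole place where triangulation is indispensable, is this merge step: I must show that the per-process orderings remain globally consistent across synchronizations \emph{without} any time-stamping, so that purely local data always suffices to reconstruct the correct relative order of independent sub-computations. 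This is exactly what the perfect elimination ordering guarantees, and it is what keeps the local memory bounded by $n!$ rather than by the tower of exponentials of the unrestricted construction.
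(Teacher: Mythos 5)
You should first be aware that the paper does not prove Theorem~\ref{theorem:automata for triangulated graphs} at all: it quotes it from \cite{CMZ93,DiekertM96} as the benchmark that Theorem~\ref{thm:correctness} improves upon, so your argument has to stand on its own. Its overall shape (transition monoids plus a chordality-based elimination structure) is indeed the spirit of the cited constructions, but as written it has a concrete counting gap. You declare the local memory of a process to be a transition-monoid element \emph{together with} a repetition-free enumeration of $S$, yet when you establish the size bound you count only the enumerations. A transition-monoid element is a partial map $S \rightharpoonup S$; the transition monoid of an $n$-state DFA can have as many as $n^n$ elements ($(n+1)^n$ with partial maps), which grows strictly faster than $n!$, and a repetition-free sequence of states does not determine such a map, so the monoid component can neither be reconstructed from the enumeration nor dropped from the count. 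As described, your automaton has local state space of size roughly $n^n \cdot n!$, not $n!$.

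The second gap is the step you yourself flag as ``the main obstacle'': that the per-process enumerations stay globally consistent across synchronizations without time-stamping. You assert that ``this is exactly what the perfect elimination ordering guarantees,'' but never prove it, and the lemmas you invoke cannot discharge it. Lemma~\ref{lem:diam} requires a common base state $s$ and a set $C_2$ of letters covering one of the two factors, and nothing in your setup shows that two synchronizing processes can extract these from their enumerations; Lemma~\ref{lemma:diamview} is proven only for the specific labelling of tree nodes by $(\stateparentview{p}{w},\stateview{\{p\}}{w},C_p^\downarrow)$, i.e., for the pairs-of-states construction of Section~\ref{sec:async automata}, and says nothing about merging monoid elements ordered by your enumerations. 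This consistency argument is precisely the technical core of both cited proofs (asynchronous mappings in \cite{CMZ93}, induction on a simplicial letter of the dependence graph in \cite{DiekertM96}), so leaving it as an appeal to chordality leaves the theorem unproven. Note also that once you invoke Lemma~\ref{lemma:tree-like iff trinagulaged} (together with the forest reduction preceding it) to realize $G_D$ by a tree-like architecture, the cleanest complete argument available within this paper is to discard the monoid machinery altogether and apply Theorem~\ref{thm:correctness}: it yields an asynchronous automaton of size $O(n^2) \le n!$, which is exactly the sense in which the paper supersedes the stated bound.
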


\section{Distributed Synthesis}
\label{sec:synthesis}

We extend now the result of decidability of the control problem for asynchronous automata from binary-tree architectures~\cite{muscholl2014distributed}  to \kl{tree-like} architectures.
We first recall the definitions and adapt them to our notations. 

\subsection{Controllers and Control Problem}
Let us fix some \kl{distributed alphabet} $(\Alp,\arch)$.
As in the setting of~\cite{RW87}, $\Alp$ is partitioned into System (\intro {controllable}) actions
$\Alp_\sys$ and Environment (\intro {uncontrollable}) actions $\Alp_\env$.
Remember that $\Alp_p$ denotes the set of $p$-\kl{local actions}, then let $\Alp_p^\sys = \Alp_p \cap \Alp_\sys$ and similarly for $\Alp_p^\env$.
As in~\cite{muscholl2014distributed} we require that all communication actions, i.e., actions shared by at least 2
processes, are uncontrollable, in other words all controllable actions
are local: $\Alp_\sys \subseteq \cup_{p\in \Proc} \Alp_p$. 
This is not a  restriction, as the general setting where communication actions can be controllable reduces to this one, as shown in Proposition 6 of \cite{muscholl2014distributed}.
Furthermore, there should be at least one controllable (thus local) action from
each state. This is of course not a restriction because one can always add
local, controllable self-loops when needed. This assumption will make the construction a bit simpler.

For control it is common to use local acceptance conditions, instead of global
ones, because more systems
are \kl{controllable} with local conditions.  That is, we replace the $\Acc$
part of the definition of \kl{AA} by a set $\{\Acc_p\}_{p \in \Proc}$.
Each $\Acc_p$ is of the form $(F_p, \Omega_p)$ with $F_p \subseteq S_p$ and
$\Omega_p : S_p \to \mathbb{N}$ a local parity function.
The idea is that a finite run is accepted by $p$ if it ends in a state in $F_p$,
and an infinite run is accepted by $p$ if it satisfies the (max) parity
condition for $p$. 
Then a run is accepted by $\A$ if it is accepted by all processes $p$.

\AP We say that a run over $w$ is \intro{maximal} if there is no way to decompose $w$ into $w = u \cdot v$ and no $a \in \Alp$ with $\arch(a) \cap \arch(v) = \emptyset$ such that the run over $u \cdot a \cdot v$ is defined.
In plain words, it means we cannot extend the run on $w$ by some action involving processes
that perform only a finite number of actions in $w$. 

Now we define the notion of \kl {controllers}. For the rest of this section, let us fix some AA $\A = ((S_p)_{p \in \Proc},(s_p^0)_{p \in \Proc}, (\delta_a)_{a \in \Alp}, (\Acc_p)_{p \in \Proc})$.
For simplicity we will not use the general definition of controllers but go
directly to what is actually called covering controllers for $\A$ in~\cite{muscholl2014distributed}, and simply
call them controllers.
It is shown in Lemma 10 of \cite{muscholl2014distributed} that the Control Problem for covering controllers is equivalent to the one for general controllers, and so we focus only on the former.

The intuition for a controller is that it is a machine (described by an AA
without acceptance condition) guiding the System in choosing which \kl{controllable}
action(s) to follow in order to get only runs accepted by $\A$.
On the other hand, as Environment actions are \kl{uncontrollable}, the controller should not be able to restrict them from happening.
To do that, a controller can enrich states of $\A$ by adding extra information and use this to choose a subset of controllable actions that are possible from this state, and enable only this subset of actions.

\AP Formally, a \intro {controller} for $\A$ is itself an AA  $\C = ((C_p)_{p \in \Proc},
(c_p^0)_{p \in \Proc}, (\Delta_a)_{a \in \Alp})$ together with a projection $\pi$
that maps each state $c_p \in C_p$ of $\C$ to a state $s_p \in S_p$ of $\A$ and
that satisfies the following requirements:
\begin{itemize}
    \item for all $a \in \Alp$ with $\arch(a) = \{p_1,\dots,p_k\}$, and $\mathbf{c},\mathbf{c'} \in C_{p_1} \times \dots \times C_{p_k}$, we have that $\Delta_a(\mathbf{c}) = \mathbf{c'}$ implies $\delta_a(\pi(\mathbf{c}))) = \pi(\mathbf{c'})$ (with $\pi$ applied component-wise),
    \item for all $a \in \Alp_\env$, if $\delta_a(\pi(\mathbf{c}))$ is defined then so is $\Delta_a(\mathbf{c})$,
    \item for all $p \in \Proc$, $\pi(c_p^0) = s_p^0$.
\end{itemize}

Thanks to the projection $\pi$,  a \kl {controller} $\C$ inherits the acceptance condition of $\A$. 
\AP A \kl {controller} $\C$ is \intro {winning} for $\A$ if every \kl{maximal} run of $\C$ is accepting.


The \intro{Control Problem}  asks, for a given input $\A$, whether there exists a winning controller $\C$ for $\A$ (and if so to build it explicitly).

\begin{example}
    Consider a server-client setup where one process, the server, communicates with several processes, the clients.
    The server broadcasts to all clients that it wants a task to be solved.
    Each client starts working independently on the task requested, and communicates back to the server when it has finished.
    When the server gets two answers back from its children, it then sends another broadcast telling all clients to stop working.

    Formally, we take $\Proc = \{p,c_1,\dots,c_n\}$ with $p$ the server and $c_1,\dots,c_n$ the clients.
    We have two actions $t_1,t_2$ representing requests for two different \emph{tasks}, $\arch(t_1) = \arch(t_2) = \Proc$. 
    Each process $c_i$ has two local actions $p_1^i,p_2^i$ to \emph{progress} those tasks, $\arch(p_1^i) = \arch(p_2^i) = \{c_i\}$. 
    Then there is a shared action $e_i$ to communicate back to the server that the task has \emph{ended}, $\arch(e_i) = \{c_i,p\}$. 
    Finally, there is an action $r$ that can broadcast to all clients to \emph{reset} their state when the task has been successfully completed, $\arch(r) = \Proc$.
    The architecture is tree-like with the server $p$ at its root, and all clients $c_i$ are the children of $p$.
    The automata of the server and the clients are illustrated in Figure~\ref{example:control}.
    For the acceptance condition, we require that the server visits its initial
    state infinitely often, and that each client performs action $e_i$
    infinitely often. For simplicity of presentation, finite runs are not
    accepted (clients do not crash). 

    Now for the \kl{Control Problem}, we say that all local actions (i.e. actions
    $p_k^i$) are controllable, while all non-local ones are uncontrollable by
    definition.
    The naive controller that only enables the correct progressing action on every process, e.g. $p_1^i$ after receiving a $t_1$ broadcast, is not winning.
    Indeed, since Environment controls which processes get to perform their
    $e_i$, it  could choose the same two processes for every task.
    Then those not chosen never get their turn at performing $e_i$ and thus will not be accepting.
    However we can still build a winning controller in a round-robin manner.
    To do this, we simply add to the state of each $c_i$ a count of how many tasks have been requested since the beginning modulo $n$, so we have $C_{c_i} = S_{c_i} \times \{0,\dots,n-1\}$ and the projection $\pi$ is simply the projection on the first component.
    If the count is $k-1$, then the controllers for processes $c_k$ and $c_{k+1}$ are set to enable only the correct progress action, while other processes enable nothing.
    After that, Environment has no choice but to perform the corresponding $e_k$ and $e_{k+1}$ to keep the run going.
    Then the server sends a reset broadcast, the next task is requested, $k$ is incremented, and each process gets to perform its $e_i$ action in turn, thus this controller is winning.
\end{example}

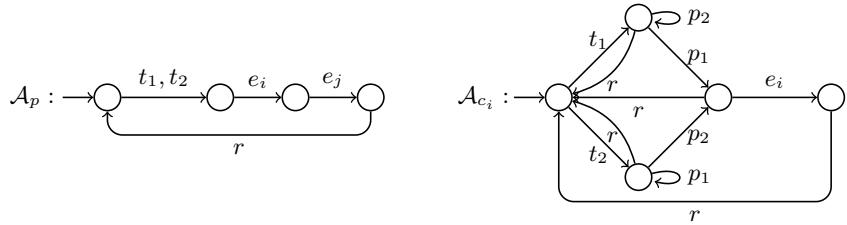
\begin{figure}
\tikzset{every state/.style={minimum size=10pt}}
\begin{center}
  \begin{tikzpicture}[
		auto,
    node distance=1.5cm,
    semithick
    ]
        \node[state,initial left,initial text=,>=stealth] (pinit) at (0,0) {};
        \node[] (p) at (-1,0) {$\A_p:$};
        \node[state] (ptask) at (1.5,0) {};
        \node[state] (p1) at (2.5,0) {};
        \node[state] (p2) at (3.5,0) {};

        \path[->] (pinit) edge [] node [above] {$t_1,t_2$} (ptask);
        \path[->] (ptask) edge [] node [above] {$e_i$} (p1);
        \path[->] (p1) edge [] node [above] {$e_j$} (p2);
        \coordinate (c1) at (3.5,-0.5);
        \coordinate (c2) at (0,-0.5);
        \draw[->,rounded corners=2mm] (p2.south) -- (c1) -- node [below] {$r$} (c2) -- (pinit.south);

        \node[] (c) at (5,0) {$\A_{c_i}:$};
        \node[state,initial left,initial text=] (cinit) at (6,0) {};
        \node[state] (c1) [above right of=cinit] {};
        \node[state] (c2) [below right of=cinit] {};
        \node[state] (cprog) [above right of=c2] {};
        \node[state] (cend) [right of=cprog] {};

        \path[->] (cinit) edge [] node [above] {$t_1$} (c1);
        \path[->] (cinit) edge [] node [below] {$t_2$} (c2);
        \path[->] (c1) edge [] node [right] {$p_1$} (cprog);
        \path[->] (c1) edge [loop right] node [right] {$p_2$} (c1);
        \path[->] (c2) edge [] node [right] {$p_2$} (cprog);
        \path[->] (c2) edge [loop right] node [right] {$p_1$} (c2);
        \path[->] (cprog) edge [] node [above] {$e_i$} (cend);
        \path[->] (c1) edge [bend left] node [below] {$r$} (cinit);
        \path[->] (c2) edge [bend right] node [below] {$r$} (cinit);
        \path[->] (cprog) edge [] node [below] {$r$} (cinit);
        \coordinate (d1) at ($(cend.south)+(0,-1.2)$);
        \coordinate (d2) at ($(cinit.south)+(0,-1.2)$);
        \draw[->,rounded corners=2mm] (cend.south) -- (d1) -- node [below] {$r$} (d2) -- (cinit.south);
  \end{tikzpicture}
\caption{A server-client protocol for distributing and executing tasks.}\label{example:control}
\end{center}
\end{figure}

\subsection{Leaf Process Simulated by Parent}
In \cite{muscholl2014distributed}, it is shown that in an architecture that
forms a tree where all channels are binary, a leaf process can be simulated by
its parent for the purpose of control w.r.t.~local parity conditions.
This is done by abstracting away all local actions of the leaf into a finite
summary, and any synchronization between the leaf and the parent can then be
simulated locally by the parent. This simulation allows to deduce an
upper complexity bound for the \kl{Control Problem} that is a tower of
exponentials in the depth of the tree.

We show here that the proof of~\cite{muscholl2014distributed} extends to \kl{tree-like} architectures.
The key observation is that, in \kl{TCA}, by the continuity condition, any communication involving a leaf and other processes necessarily includes the parent of the leaf as well.
Therefore, if some process relies on the state of the to-be-removed leaf for some transition, it will still be able to access that information after the removal because it will instead be provided by the parent that is simulating the leaf.

%
%

%

For the rest of this subsection we consider an AA $\A$ and a
\kl{TCA} where $\Proc$ is the set of processes, \leaf is a leaf process,
and $p$ is the parent of \leaf.
The first step is to ensure that processes communicate ``frequently''. This will
provide a finite description of what happens locally on leaf \leaf between two
consecutive actions shared by \leaf with other processes. 

\begin{definition}
    An AA $\A$ is \intro {\leaf-short} if there is some bound $B \in \N$ such that in
    every local \leaf-run of $\A$, the number of local actions that \leaf can do
    in a row is
    at most $B$.
\end{definition}

\begin{lemma}[Th.10, \cite{muscholl2014distributed}]
    For every AA $\A$ we can construct an \kl {\leaf-short automaton}
    $\A^\circledS$ such that $\A$ is controllable iff $\A^\circledS$ is
    controllable. The state space of any $q \not=\ell$ does not change from $\A$ to
    $\A^\circledS$; the states of $\ell$ in $\A^\circledS_\ell$ are simple, local
    paths of $\A_\ell$. 
    \label{lemma:l-short automata}
\end{lemma}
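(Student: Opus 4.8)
The plan is to carry out a transformation that is entirely \emph{local} to the leaf \leaf: every component $q \neq \leaf$ of $\A$ — its state space, its transitions, and its local acceptance $(F_q,\Omega_q)$ — is copied verbatim into $\A^\circledS$, which immediately yields the claim that the state space of any $q \neq \leaf$ is unchanged. It then remains to redefine $\A_\ell$ as an automaton $\A^\circledS_\ell$ whose states are the \emph{simple} paths of the local transition graph of $\A_\ell$ (the graph on $S_\ell$ whose edges are the local transitions $\delta_a$, $a \in \Alp_\ell$), and to argue that controllability is preserved.

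Concretely, I would let a state of $\A^\circledS_\ell$ be the sequence $\langle s_0,\ldots,s_m\rangle$ of \emph{distinct} states that \leaf has traversed locally since its last shared action, with $s_m$ its genuine current state; the projection read by the shared transitions only inspects $s_m$. A local action $a$ with $\delta_a(s_m)=s'$ defined is available exactly when $s'$ is fresh, in which case it extends the path to $\langle s_0,\ldots,s_m,s'\rangle$, while any shared action resets the path to the single vertex reached after the communication. Since extension transitions strictly increase the length of a simple path, the local transition graph of $\A^\circledS_\ell$ is acyclic and no simple path has more than $|S_\ell|$ vertices, so \leaf performs at most $B := |S_\ell|$ local actions between two communications. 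This establishes \leaf-shortness together with the stated shape of the state space.

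For the equivalence of controllability I would set up a two-way correspondence between (maximal) plays of $\A$ and of $\A^\circledS$ that leaves the non-leaf processes pointwise identical and replaces each maximal local \leaf-segment of a play of $\A$ by the simple path obtained by cutting out its loops. A winning \kl{controller} for $\A^\circledS$ yields one for $\A$ by re-inserting loops (the recorded last vertex $s_m$ is the real state, so every reconstructed move is legal in $\A$); conversely a winning controller for $\A$ is turned into one for $\A^\circledS$ by following its strategy while tracking the current simple path. The crucial point, and what I expect to be the main obstacle, is that loop-cutting must preserve \emph{both} flavours of the local acceptance of \leaf: a finite run is unaffected, since only the endpoint $s_m$ and its membership in $F_\ell$ matter, but an \emph{infinite} local \leaf-behaviour — one after which \leaf never communicates again — that is accepting only through the parity condition cannot be reproduced directly, because $\A^\circledS_\ell$ admits no infinite purely-local run. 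I would resolve this exactly as in \cite{muscholl2014distributed}: a winning controller never needs to drive \leaf around a losing loop, and the best parity value realizable by looping at a given position is a function of the current simple path, so it can be folded into $\Omega^\circledS_\ell$ and used to decide acceptance without the loop being executed.

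Finally, the adaptation to the \kl{tree-like} setting enters only in verifying that this purely local surgery does not disturb the rest of the system. Here I would invoke the connectivity (continuity) condition of a \kl{TCA}: since $\arch(a)$ is connected in $\tree$ and \leaf is a leaf, every shared action involving \leaf also involves its parent $p$. Hence the entire interface between \leaf and the remaining processes is mediated by $p$, the only datum any transition reads from \leaf is the projection $s_m$ that $\A^\circledS_\ell$ still exposes, and the play correspondence above respects all transitions of $\A$ verbatim. This is precisely the observation announced before the lemma, and it lets the original tree proof of \cite{muscholl2014distributed} go through in the \kl{tree-like} case without further change.
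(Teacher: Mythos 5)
Your proposal should first be calibrated against what the paper actually does: the paper gives \emph{no} proof of this lemma. It imports it wholesale as Theorem~10 of \cite{muscholl2014distributed} and only remarks that the proof is involved, does not depend on the communication architecture, and rests on the equivalence of parity games with \emph{finite cycle games}. Your overall skeleton is consistent with that cited construction and with the paper's surrounding discussion: all processes $q \neq \ell$ are copied verbatim, the states of $\ell$ become simple local paths reset by communications, and the TCA connectivity condition is what keeps the surgery local (every shared action of $\ell$ goes through its parent). So the shape is right.

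The genuine gap is in your treatment of cycles, which is exactly where the cited theorem does its work. You make a local transition of $\ell$ available \emph{only when its target is fresh}, i.e.\ you delete cycle-closing moves, and you propose to compensate by folding ``the best parity value realizable by looping'' into $\Omega^\circledS_\ell$. This is unsound as stated, in both directions. Deleting moves also deletes \emph{uncontrollable} ones: if $\ell$ has a state $s \in F_\ell$ carrying an uncontrollable local self-loop of odd priority, then $\A$ is not controllable (a controller may never disable environment actions, so the environment can loop forever and violate the parity condition), yet in your $\A^\circledS$ that self-loop simply does not exist, $\ell$ is stuck at the path $\langle s \rangle$ whose endpoint lies in $F_\ell$, and every maximal run is accepting, so $\A^\circledS$ \emph{is} controllable; dually, a controllable even self-loop with $F_\ell = \emptyset$ makes $\A$ controllable but your $\A^\circledS$ uncontrollable. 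Moreover, the compensation cannot live in $\Omega^\circledS_\ell$: the runs needing re-evaluation are precisely those that become \emph{finite} (stuck) in $\A^\circledS$, and finite runs are judged by $F^\circledS_\ell$, which you leave as ``endpoint in $F_\ell$'' --- but redefining $F^\circledS_\ell$ is where all the work lies. Finally, ``best realizable parity value'' is the wrong quantity: both players own local actions, so whether looping helps is the value of a two-player local parity game, and the content of the finite-cycle-game theorem invoked by the paper is that this value is decided by the parity of the first cycle \emph{actually closed} --- which is exactly why simple paths suffice as memory. A correct construction must therefore detect cycle-closing moves and evaluate them (even cycle recorded as a win for the System, odd as a loss) inside the finitary acceptance component, rather than forbid them; without that evaluation the claimed equivalence of controllability fails.
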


Although the proof of Lemma~\ref{lemma:l-short automata} is quite involved, it
does not rely on the communication architecture being a tree. The main idea behind is
that every parity game is equivalent to a finite cycle game.

It follows that to show that a leaf process can be simulated by its
parent (for the purposes of control) it is enough to work with $\A^\circledS$,
so in particular, with an
\kl {\leaf-short} AA. 
From Lemma~\ref{lemma:l-short automata} we know that the bound $B$ can be set to the
number of $\ell$-states in $\A$, so $B=|S_\ell|$ with $S_\ell$ the set of states of
$\ell$ in $\A$. 

We are now ready to show that given $\A$ we can construct an \kl{AA}
$\A^\minusell$ whose set of processes is $\Proc\setminus
\{\leaf\}$ and such that $\A$ is \kl{controllable} iff $\A^\minusell$ is
controllable. Moreover, for every $p'\neq p$ we have $\A^\minusell_{p'}=\A_{p'}$
and the size of $\A^\minusell_p$ is proportional to $|\A_p| \cdot |\Alp_p^\sys| \cdot \mathit{exp}(|\A_\leaf|)$. 

Let $\A = ((S_p)_{p \in \Proc}, (s_p^0)_{p \in \Proc}, (\delta_a)_{a \in \Alp}, (\Acc_p)_{p \in \Proc})$, $\leaf \in \Proc$ a leaf and $p \in \Proc$ its parent, and assume that $\A$ is \leaf-short thanks to Lemma~\ref{lemma:l-short automata}.
With $\Proc^\minusell = \Proc \setminus \{\leaf\}$, we build $\A^\minusell = ((S_p^\minusell)_{p \in \Proc^\minusell}, (s_p^{0\minusell})_{p \in \Proc^\minusell}, (\delta_a^\minusell)_{a \in \Alp^\minusell}, (\Acc_p^\minusell)_{p \in \Proc^\minusell})$.
The new alphabet $\Alp^\minusell$ will be defined shortly after.
For all processes $q \neq p$, we have $S_q^\minusell = S_q$, $s_q^{0\minusell} = s_q^0$, and $\Acc_q^\minusell = \Acc_q$.
For all actions $a \in \Alp$ such that $\arch(a) \cap \{\leaf,p\} = \emptyset$, we have $\delta_a^\minusell = \delta_a$.
That is, all components that are not related to either $p$ or \leaf remain unchanged.

\AP Let us define $S_p^\minusell$. 
To that end, we define \kl {\leaf-local strategies} first.
Remember that $\A$ is \kl{\leaf-short}, so let $B$ be the corresponding bound.
Recall also that $\Sigma_\leaf$ is the set of
$\leaf$-local actions.
For presentation purposes, we write $s \xrightarrow{a_1 \dots a_n} s'$ as a stand in for $s'
= \delta_{a_n}(\dots(\delta_{a_1}(s)))$ when $a_1,\dots,a_n \in \Sigma_\leaf$ and $s,s'$ are
states of $\leaf$.
We write $s\xrightarrow{a_1\cdots a_n}$ when there exists a state $s'$ such that $s \xrightarrow{a_1\dots a_n} s'$. 
An \intro {\leaf-local strategy} from a given state $s_\leaf \in S_\leaf$ is a partial function $f: (\Alp_\leaf)^{\leq B} \to \Alp_\leaf^\sys$ such that whenever $f(w) = a$ then $s_\leaf \xrightarrow{w \cdot a}$ is defined in $\A_\leaf$.
If $s_\leaf \xrightarrow{a} s_\leaf'$ and $f$ is an \leaf-local strategy from $s_\leaf$, then $f_{\mid a}$ is the \leaf-local strategy from $s_\leaf'$ defined as $f_{\mid a}(w) = f(a \cdot w)$.
With this defined, a state in $S_p^\minusell$ of the parent $p$ of $\leaf$ is of one of 3 types:
\[(s_p,s_\leaf), (s_p,s_\leaf,f), (s_p,a,s_\leaf,f)\]
where $s_p \in S_p$, $s_\leaf \in S_\leaf$ are the simulated states of $p$ and
\leaf respectively, $f$ is an \kl{\leaf-local strategy} from $s_\leaf$, and $a
\in \Alp_p^\sys$ is a controllable local action of the parent $p$.
Since $\A$ is \leaf-short, \leaf-local strategies can be seen simply as paths of $\A_\leaf$ without repetitions, and thus there are only $\mathit{exp}(|\A_\leaf|)$ many of them.
Finally, we let $s_p^{0\minusell} = (s_p^0, s_\leaf^0)$ be the initial state of $p$. 

We now precisely define the new alphabet $\Alp^\minusell$ before defining the new transitions.
All actions that were previously \leaf-local are changed into uncontrollable $p$-local actions, regardless of whether they were controllable or not in $\Alp$.
Similarly, all actions that were $p$-local are now uncontrollable in $\Alp^\minusell$, but for each $p$-local action $a \in \Alp_p^\sys$ that was controllable we add a new action $\choice(a)$ in $\Alp^\minusell$ that is also $p$-local and controllable.
We also add new actions $\choice(f)$ that are $p$-local and controllable for each \kl{\leaf-local strategy} $f$.
Any non-local action that included \leaf in $\Alp$ remains in $\Alp^\minusell$ but with \leaf removed from its domain.
All remaining actions in $\Alp$, i.e. those that are local to some process not in $\{p,\leaf\}$ or those that are non-local and do not intersect with \leaf, remain unchanged.

Let us now define transitions over actions involving $p$.
First, from a state of $p$ of the form $(s_p,s_\leaf)$, there is only one kind of transition available
\begin{align*}
(s_p,s_\leaf) \xrightarrow{\choice(f)} (s_p,s_\leaf,f) \text{ for all \kl{\leaf-local strategies} $f$ from $s_\leaf$ }
\end{align*}
where System fixes a local strategy for \leaf.
Then from these states there is again only one kind of transition possible
\begin{align*}
(s_p,s_\leaf,f) \xrightarrow{\choice(a)} (s_p,a,s_\leaf,f) \text{ for all $a \in \Alp_p^\sys$ enabled from $s_p$}   
\end{align*}
where System chooses one $p$-local controllable (in $\Alp_p$)
action\footnote{The assumption that in each state there is at least one enabled
controllable action is used here.}, provided there is a transition from $s_p$ with this
action.
Then there are multiple choices from here. The two actions 
\begin{align*}
&(s_p,a,s_\leaf,f) \xrightarrow{b_p} (s_p', s_\leaf, f) \text{ if } (s_p,s_p') \in \delta_{b_p} \text{ and $b_p=a$ or $b_p \in \Alp_p^\env$}\\
&(s_p,a,s_\leaf,f) \xrightarrow{b_\leaf} (s_p,a,s_\leaf',f_{\mid b_\leaf}) \text{ if $(s_\leaf,s_\leaf') \in \delta_{b_\leaf}$ and $b_\leaf$ is enabled by $f$} 
\end{align*}
are $p$-local actions (all uncontrollable) which either progress $p$ by its pre-selected (previously) controllable action $a$, progress $p$ by any local uncontrollable action $b_p$, or progress \leaf by any action enabled by the pre-selected strategy $f$, i.e. any uncontrollable \leaf-local action or the one controllable action output by $f$ on $\varepsilon$.
Then there are also non-local actions where $p$ communicates with other processes.
\begin{align*}
&((s_p,a,s_\leaf,f),s_{q_1},\dots,s_{q_k}) \xrightarrow{b_{p+}} ((s_p',s_\leaf,f),s_{q_1}',\dots,s_{q_k}') \text{ if }\\
\notag&\quad p \in \arch(b_{p+}), \leaf \notin \arch(b_{p+}), ((s_p,s_{q_1},\dots,s_{q_k}),(s_p',s_{q_1}',\dots,s_{q_k}')) \in \delta_{b_{p+}}\\
&((s_p,a,s_\leaf,f),s_{q_1},\dots,s_{q_k}) \xrightarrow{b_{p\leaf+}} ((s_p',s_\leaf'),s_{q_1}',\dots,s_{q_k}') \text{ if }\\
\notag&\quad \leaf,p \in \arch(b_{p\leaf+}), ((s_p,s_\leaf,s_{q_1},\dots,s_{q_k}),(s_p',s_\leaf',s_{q_1}',\dots,s_{q_k}')) \in \delta_{b_{p\leaf+}}
\end{align*}
The first kind is for when \leaf is excluded from the communication, so while $s_p$ gets updated to a new state and the choice of $a$ is reset, $s_\leaf$ and $f$ remain unchanged.
The second kind occurs when \leaf is part of the communication, so both $s_p$ and $s_\leaf$ are read and updated, and the choices of $a$ and $f$ are reset.
Note that $k=0$ covers the case where a communication involving only $p$ and \leaf occurs.
Also, since we assumed that the architecture is tree-like, it is not possible for a communication to involve \leaf and other processes without $p$ being included, which is why this case is not covered.
Note that this is the only major difference with the original construction from \cite{muscholl2014distributed}, and it is the reason that the proofs of correctness are mostly similar.

The last  part of $\A^\minusell$ is the acceptance conditions $\Acc_p^\minusell = (F_p^\minusell,\Omega_p^\minusell)$.
For $F_p^\minusell$ it is simple: any state of the form $(s_p,s_\leaf) \in F_p
\times F_\leaf$, and any $(s_p,s_\leaf,f),(s_p,a,s_\leaf,f)$ where $s_p \in F_p$
and any possible maximal
continuation respecting $f$ from $s_\leaf$ eventually reaches a state in
$F_\leaf$, are in this set. 
Defining $\Omega_p^\minusell$ is trickier. First,  an infinite run of $p$ must satisfy the parity condition of $\Omega_p$ when restricted to the $s_p$ component of the state.
Then there are two cases.
If the projection on the $s_\leaf$ component is infinite and satisfies
$\Omega_\leaf$ then the run must be accepted, which amounts to a conjunction of
two parity conditions. This can be translated into a single parity condition,
and we will discuss the complexity in Section~\ref{sec:wrap-up-control}.
Otherwise, we require that for the last $f$ and $s_\leaf$ reached (uniquely
defined as the only components seen infinitely often in the run), any possible
continuation respecting $f$ from $s_\leaf$ eventually reaches a state in
$F_\leaf$. This amounts to a single parity condition, the one on $p$. 

Note that we do not define what happens when the run on \leaf is infinite but the run on $p$ is finite, because this is not compatible with the \leaf-short assumption.

Now that $\A^\minusell$ is defined, we are ready to prove the following theorem (proof in appendix).

\begin{apxtheoremrep}
    For every \kl {\leaf-short} AA $\A$ with local acceptance condition:
    there is a \kl {winning controller} for $\A$ iff there is a winning controller for $\A^\minusell$.
    \label{theorem:remove one leaf}
\end{apxtheoremrep}

\begin{proof}
    \fbox{$\Rightarrow$}
    Let $\C = ((C_p)_{p \in \Proc}, (c_p^0)_{p \in \Proc}, (\Delta_a)_{a \in \Alp})$ be a winning controller for $\A$ with associated projection $\pi$.
    Let $c \in C_\leaf$, we build an \kl{\leaf-local strategy} from $\pi(c)$ called $f_c$ as follows.
    Let $c \xrightarrow{a_1} c_1 \xrightarrow{a_2} \dots \xrightarrow{a_k} c_k$ be a run in $\C_\leaf$ with $a_1,\dots,a_k \in \Alp_\leaf$, and let $a$ be one controllable action enabled from $c_k$ (if there are more than one, pick any).
    Then we define $f_c(a_1 \dots a_k) = a$.
    Since $\A$ is \leaf-short, $f_c$ is a well-defined \leaf-local strategy.
    
    We build the controller $\C^\minusell = ((C_p^\minusell)_{p \in \Proc^\minusell}, (c_p^{0\minusell})_{p \in \Proc^\minusell}, (\Delta_a^\minusell)_{a \in \Alp^\minusell})$ for $\A^\minusell$.
    Components not related to $p$ are unchanged: $C_q^\minusell = C_q$, $c_q^{0\minusell} = c_q^0$ for all $q \neq p$, and $\Delta_a^\minusell = \Delta_a$ for all $a \in \Alp^\minusell$ such that $p \notin \arch(a)$.
    States in $C_p^\minusell$ are of the form $(c_p,c_\leaf)$,
    $(c_p,c_\leaf,f)$, or $(c_p,a,c_\leaf,f)$ where $c_p \in C_p$, $c_\leaf \in
    C_\leaf$, $a \in \Alp_p$, and $f$ is a \leaf-local strategy from
    $\pi(c_\leaf) $.
    Its initial state is $c_p^{0\minusell} = (c_p^0,c_\leaf^0)$.
    The associated projection $\pi^\minusell$ is derived from $\pi$: $\pi^\minusell(c_p,c_\leaf) = (\pi(c_p),\pi(c_\leaf))$, and so on.
    
    Finally we describe the transitions associated with component $p$.
    The only controllable action enabled from state $(c_p,c_\leaf)$ is the action $\choice(f)$ where $f = f_{c_\leaf}$ is the \leaf-local strategy extracted from $C_\leaf$ as defined above, leading to state $(c_p,c_\leaf,f)$.
    The only controllable action enabled from $(c_p,c_\leaf,f)$ is the action
    $\choice(a)$ for one action $a \in \Alp_p^\sys$ enabled in $c_p$, which
    leads to state $(c_p,a,c_\leaf,f)$. If there are several such $a$, uniquely
    pick one.
    If there is none, assign some arbitrarily picked action\footnote{This helps to simplify the definition of transitions,
    avoiding case distinction. In this case, the only possible continuations are
    either non-controllable actions of $p,\leaf$, or applying the strategy $f$.} $a_0\in
    \Sigma_p$.
    All other actions are uncontrollable, and follow the structure of $\A^\minusell$:
    \begin{itemize}
        \item $(c_p,a,c_\leaf,f) \xrightarrow{b_p} (c_p',c_\leaf,f)$ if $c_p \xrightarrow{b_p} c_p'$ and either $b_p=a$ or $b_p \in \Alp_p^\env$ executes either the previously picked controllable $a$ or any uncontrollable $p$-local action,
        \item $(c_p,a,c_\leaf,f) \xrightarrow{b_\leaf} (c_p,a,c_\leaf',f_{\mid b_\leaf})$ if $c_\leaf \xrightarrow{b_\leaf} c_\leaf'$ and either $b_\leaf=f(\varepsilon)$ or $b_\leaf \in \Alp_\leaf^\env$ executes an enabled \leaf-local action,
        \item $((c_p,a,c_\leaf,f),c_{q_1},\dots,c_{q_k}) \xrightarrow{b_{p+}} ((c_p',c_\leaf,f),c_{q_1}',\dots,c_{q_k}')$ if $\Delta_{b_{p+}}(c_p,c_{q_1},\dots,c_{q_k}) = \\(c_p',c_{q_1}',\dots,c_{q_k}')$ is a communication between $p$ and other processes not including \leaf,
        \item $((c_p,a,c_\leaf,f),c_{q_1},\dots,c_{q_k}) \xrightarrow{b_{p\leaf+}} ((c_p',c_\leaf'),c_{q_1}',\dots,c_{q_k}')$ if $\Delta_{b_{p\leaf+}}(c_p,c_\leaf,c_{q_1},\dots,c_{q_k}) = \\(c_p',c_\leaf',c_{q_1}',\dots,c_{q_k}')$ is a communication between $p$ and other processes including \leaf.
    \end{itemize}
    It is easy to see that $\C^\minusell$ is indeed a controller for $\A^\minusell$ and that $\pi^\minusell$ satisfies the required conditions since by definition $\C^\minusell$ mimics the structure of $\A^\minusell$.

    Finally, we show that $\C^\minusell$ is winning for $\A^\minusell$.
    Let $\rho^\minusell$ be a maximal run of $\C^\minusell$.
    From it, we build the corresponding run $\rho$  of $\C$ by removing from
    $\rho^\minusell$ transitions with actions of the form $\choice(f)$ and
    $\choice(a)$, hiding extra components $a$ and $f$ in the states, and
    decoupling components $c_p$ and $c_\leaf$.
    One can prove by induction that $\rho$ is indeed a run of $\C$.

    
    It is possible that $\rho$ is not maximal when the projection on $\leaf$ is
    finite, ending say with $c_\leaf,f$.  In this case we take any possible local continuation for $\leaf$
    according to $f$. So we can assume that the run $\rho$  is maximal. 

    We show that $\rho^\minusell$ is accepting.
    We know that $\rho$ is maximal and $\C$ is winning, so all components are accepting.
    This directly implies that for all $q \neq p$, the run $\rho^\minusell$ over component $q$ is accepting, whether it is finite or infinite.
    Only component $p$ remains.
    If the run over $p$ is finite, it ends in a state of the form $(c_p,a,c_\leaf,f)$.
    This means that in $\rho$, components $p$ and \leaf end in states $c_p$ and
    $c_\leaf$. First we note that  $\pi(c_p) \in F_p$. Since $\C_\leaf$ is also
    winning, all maximal continuations of $\leaf$ from $c_\leaf$ according to
    $f$ reach $\pi^{-1}(F_\leaf)$. 
    Therefore $\pi^\minusell(c_p,a,c_\leaf,f) \in F_p^\minusell$.
    If the run over $p$ is infinite, either there are infinitely many actions affecting \leaf or not.
    In the former case, then in $\rho$ both runs over $p$ and \leaf are infinite, and belong to $\Omega_p$ and $\Omega_\leaf$ respectively.
    Then by definition in $\rho^\minusell$ the run over $p$ satisfies $\Omega_p^\minusell$.
    In the latter case, in $\rho$ the run over $p$ is infinite and belongs to $\Omega_p$, but for \leaf the run is finite and ends in some state $c_\leaf \in \pi^{-1}(F_\leaf)$ with no possible continuation, otherwise contradicting the maximality assumption.
    Then $\Omega_p^\minusell$ is satisfied, which ends this direction of the proof.
        
    \fbox{$\Leftarrow$}
    Let $\C^\minusell = ((C_p^\minusell)_{p \in \Proc^\minusell}, (c_p^{0\minusell})_{p \in \Proc^\minusell}, (\Delta_a^\minusell)_{a \in \Alp^\minusell})$ be a winning controller for $\A^\minusell$ with associated projection $\pi^\minusell$.
    We want to build a winning controller $\C$ for $\A$.

    Let us first give the intuition for $\C$.
    We can assume that $\C^\minusell$ only enables at most one controllable
    action at a time (such a controller is more restrictive, so if it was
    winning  before, it remains winning).
    Since states whose projection is of the form $(s_p,s_\leaf)$ or $(s_p,s_\leaf,f)$ only have outgoing transitions with controllable actions, this means that from those states there is a forced path to a \emph{true} state of the form $(s_p,a,s_\leaf,f)$.
    We will ignore the former intermediary states and only focus on the true ones.
    The goal is to decouple the component $C_p^\minusell$ into $C_p$ and $C_\leaf$.
    Each of them will hold a copy of the state of $C_p^\minusell$, and $C_\leaf$ will additionally hold the sequence of \leaf-local actions made since the last communication with $p$.
    Both $p$ and \leaf will progress independently, respecting the choices of $a$ and $f$ respectively.
    When a communication between them (and possibly other processes) happens, they will combine the state of $p$ and the sequence of actions of \leaf to get back the actual state that would have been reached in $C_p^\minusell$, and then update accordingly.
    This works because the progress of $p$ and \leaf is independent due to the tree-like structure, meaning that we can combine them sequentially and get the same result as any other order.

    Formally, we define true states of component $p$ as the set 
    \[\TS = \{c_p \in C_p^\minusell \mid \pi^\minusell(c_p) \text{ is of the form } (s_p,a,s_\leaf,f)\}\]
    and for any state $c_p \notin \TS$ we let $\ts(c_p)$ be the unique true state reachable from $c_p$ with either one $\choice(a)$ transition, or two $\choice(f), \choice(a)$ transitions.

    We build the controller $\C = ((C_p)_{p \in \Proc}, (c_p^0)_{p \in \Proc}, (\Delta_a)_{a \in \Alp})$ for $\A$.
    As before, all components unrelated to $p$ and \leaf are the same as in $\C^\minusell$.
    States of $p$ are $C_p = \TS$, and states of \leaf are $C_\leaf = \TS \times \Alp_\leaf^{\leq B}$.
    Initial states are $\ts(c_p^{0\minusell})$ and $(\ts(c_p^{0\minusell}),\varepsilon)$ respectively.
    Projection $\pi$ is defined as follows: let $c \in \TS$ and
    $\pi^\minusell(c) = (s_p,a,s_\leaf,f)$, then for $p$ we set $\pi(c) = s_p$
    and for \leaf we set $\pi((c,w))=s'_\leaf$ with  $s_\leaf \stackrel{w}{\to} s'_\leaf$.
    There are four different types of transitions that involve $p$ or \leaf.
    \begin{itemize}
        \item For $p$-local transitions $b \in \Alp_p$, we have
        $\Delta_{b}(c_p) = \ts(\Delta_{b}^\minusell(c_p))$ i.e., we progress
        as in $\C_p^\minusell$ but skip non-true states.
        \item For \leaf-local transitions $b \in \Alp_\leaf$, we have
        $\Delta_{b}((c_p,w)) = (c_p,wb)$ if $c_p
        \stackrel{wb}{\to}$ in $C_p^\minusell$,
        i.e., we only record the action made in
        the second component of the state.
        Note that $|wb| \le B$ because of the projection of states of
        $\C^\minusell$ onto states of $\A^\minusell$, and because of the
        $B$-short strategies $f$.
        \item Let $b$ involve $p$ and $q_1,\dots,q_k$, but not \leaf, and assume
        that $\Delta_b^\minusell(c_p,c_{q_1},\dots,c_{q_k}) =
        (c_p',c_{q_1}',\dots,c_{q_k}')$. Then we set
        $\Delta_b(c_p,c_{q_1},\dots,c_{q_k}) =
        (\ts(c_p'),c_{q_1}',\dots,c_{q_k}')$ i.e., same as in the first case.
        \item Let $b$ involve $p$, \leaf, and $q_1,\dots,q_k$, and assume that
        $\Delta_b^\minusell(c_p^3,c_{q_1},\dots,c_{q_k}) =
        (c_p',c_{q_1}',\dots,c_{q_k}')$.  Assume also that $w \in
        \Sigma^*_\leaf$ is a sequence of local $\leaf$-actions such that $c_p^1
        \stackrel{w}{\to} c_p^3$  in $C_p^\minusell$.
        
         We set $\Delta_b((c_p^2,w),c_p^1,c_{q_1},\dots,c_{q_k}) =
         ((\ts(c_p'),\varepsilon),\ts(c_p'),c_{q_1}',\dots,c_{q_k}')$ in ${\cal C}$.
         That is, first we combine the state $c_p^1$ reached in the
         $p$-component (from $c_p^2$)
        with the sequence $w$ accumulated in \leaf to obtain the actual state
        $c_p^3 \in C_p^\minusell$, followed by the $b$-transition in
        $\C^\minusell$. From the new state $c'_p$ we compute the new true state, and set the states of $p$ and \leaf
        to the new true state while also resetting the sequence of \leaf-local
        transitions in the \leaf component.
    \end{itemize}
    Showing that $\C$ is indeed a controller of $\A$ is done by checking each requirement on each type of transition.
    The first three types pose no problem simply by the way $\C$ is defined.
    For the fourth type it relies on the observation that the $p$-component and the \leaf-component are progressing independently between communications.
    Thus their progress can be combined in the way described above to obtain back an actual run of $\C^\minusell$.

    Given any run in $\C$, one can build an equivalent \emph{well-ordered} run where in between two actions involving both $p$ and \leaf, we push all \leaf-local actions to the end.
    This means we can focus on well-ordered runs of $\C$ only.
    Then, any well-ordered run $\rho$ of $\C$ can be translated to a sequence $\rho^\minusell$ by simply adding the (uniquely defined) missing actions $\choice(a)$ and $\choice(f)$ in between every two true states, if needed.
    A case analysis on the transitions shows the following by induction on the length of $\rho$:
    \begin{enumerate}
        \item $\rho^\minusell$ is a well-defined run of $\C^\minusell$,
        \item for processes $q \notin \{p,\leaf\}$, the state of $q$ at the end of $\rho$ is the same as in $\rho^\minusell$,
        \item for processes $p$ and \leaf, we have that the $\pi$ projection of the state of $p$ (respectively \leaf) at the end of $\rho$ 
        is the same as the $p$ (respectively \leaf) component of the $\pi^\minusell$ projection of the state of $p$ at the end of $\rho^\minusell$,
        \item $\rho^\minusell$ is maximal if $\rho$ is maximal.
    \end{enumerate}
    Combining those properties and the fact that $C^\minusell$ is a winning controller for $\A^\minusell$ leads to $\C$ being a winning controller for $\A$, which concludes the proof.
\end{proof}

\subsection{Solving the Control Problem on Tree-like Architectures}\label{sec:wrap-up-control}

Theorem~\ref{theorem:remove one leaf} generalizes the leaf-elimination procedure
of \cite{muscholl2014distributed} from binary channels to \kl{tree-like}
architectures. Iterating this step reduces the distributed control problem to one
involving a single process, so to a usual parity game. It is argued in
\cite{muscholl2014distributed}  that the size of the resulting parity game is
$\text{Tower}_d(n)$, where $n$ is the size of the AA and $d \ge 0$ the depth of the
process tree. A matching lower bound for the control problem is provided in
\cite{GGMW13}. For the upper bound
 we provide below a detailed argument based on the following lemma:

\begin{lemma}
    Given $n+1$ parity conditions $p_0$ and $p_1,\ldots, p_n$, there is a deterministic parity automaton with $O\left ( (p_0+p)\cdot \frac{p! \cdot \left (p_0+1\right )^p}{\left (\prod_{i>0}\left (p_i!\right )\right )}\right )$ states and $p_0+p$ priorities, where $p=\sum_{i>1}p_i$, reading tuples of $n+1$ priorities and accepting iff all $n+1$ parity conditions are accepting.
    \label{lemma:parity conversion}
\end{lemma}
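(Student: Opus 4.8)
The plan is to realize the conjunction through a deterministic \emph{appearance-record} automaton rather than through a generic Streett-to-parity translation, since a conjunction of max-parity conditions is a Streett-type condition and the index-appearance-record (IAR) memory is exactly what yields the multinomial-sized blow-up in the statement. First I would fix notation: the input letters are tuples $(x_0,x_1,\dots,x_n)$ where $x_i\in\{0,\dots,p_i\}$ is the priority fed to the $i$-th condition, and the goal is a single output priority in a range of size $p_0+p$ whose own max-parity reading is even exactly when, for every $i$, the largest priority occurring infinitely often in component $i$ is even.

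For the memory I would treat the $n$ secondary conditions together and the distinguished condition $p_0$ separately. Each condition $i\ge 1$ contributes its $p_i$ priority \emph{levels}, which are linearly ordered, and the automaton maintains a recency ordering of all these levels: since the relative order of the levels of a single condition is fixed, the reachable orderings are precisely the interleavings of $n$ sequences of lengths $p_1,\dots,p_n$, of which there are $\tfrac{p!}{\prod_{i>0}p_i!}$. In addition, to couple the secondary levels with $p_0$, each of the $p$ levels stores a value in $\{0,\dots,p_0\}$ recording the most significant $p_0$-priority seen since that level was last refreshed, contributing the factor $(p_0+1)^p$; a pointer over the $p_0+p$ output priorities accounts for the remaining $(p_0+p)$ factor. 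On reading a tuple I would update the current $p_0$-value, move to the front every secondary level just hit, and recompute the stored $p_0$-values of the levels that were passed.

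The output priority is then read off from the deepest (least recent) level that moved on this step, combined with its parity and its recorded $p_0$-value, so that it lands in a range of size $p_0+p$. Correctness follows the standard IAR argument: along any run the recency ordering stabilizes on a suffix consisting exactly of the levels that are \emph{not} hit infinitely often, the shallowest level that keeps moving corresponds to the maximal priority occurring infinitely often across the secondary conditions, and the recorded $p_0$-values stabilize to reflect $p_0$'s own maximal infinitely-often priority; one then checks that the maximal output priority seen infinitely often is even iff all $n+1$ components satisfy their parity conditions. Multiplying the three memory factors and the pointer gives the claimed $O\!\left((p_0+p)\cdot \tfrac{p!\,(p_0+1)^p}{\prod_{i>0}p_i!}\right)$ states and $p_0+p$ priorities.

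I expect the main obstacle to be the coupling between $p_0$ and the secondary appearance record: defining the per-level stored $p_0$-value and the output-priority function so that a \emph{single} output parity condition is met exactly when all $n+1$ conditions are met, and then verifying this via an invariant describing which levels and which stored $p_0$-values stabilize in the limit. The purely secondary part (the conjunction of $p_1,\dots,p_n$) is the classical IAR correctness proof, so the real work is making $p_0$ interact with it without inflating either the number of priorities beyond $p_0+p$ or the state count beyond the stated multinomial bound.
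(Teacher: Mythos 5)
Your proposal is correct and follows essentially the same route as the paper: a variant of the index appearance record in which the secondary conditions $p_1,\ldots,p_n$ contribute the multinomial factor $\frac{p!}{\prod_{i>0}(p_i!)}$ (since each condition's levels keep their fixed internal order), the interleaving with $p_0$ is encoded by a value in $\{0,\ldots,p_0\}$ per secondary level giving the $(p_0+1)^p$ factor, and a pointer to the most significant index moved on each transition yields the remaining $(p_0+p)$ factor and the $p_0+p$ output priorities. Your per-level \emph{maximal $p_0$-priority since last refresh} is just an equivalent re-encoding of the paper's \emph{number of $p_0$-indices preceding each position}, so the constructions coincide.
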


\begin{proof}
    We use a variant of the index appearance record construction \cite{Safra92,Loding99} to convert $n+1$ parity conditions to one condition.
    We start by using the index appearance record to convert the conditions $p_1,\ldots, p_n$ to one parity condition with $p$ priorities.
    This is a straightforward application of the index appearance record, noticing that the order within each parity condition is always maintained. Thus, we keep a permutation of $p$ indices, but where all the indices belonging to a single parity condition are always ordered in decreasing order.
    Thus, there are $\frac{p!}{\left (\prod_{i>0}\left (p_i!\right )\right )}$ such permutations.
    It remains to keep the respective order of the elements in $p_0$ among the sequence of $p$ indices.
    As the elements of $p_0$ themselves are ordered in decreasing order, it is sufficient to keep track of how many of them appear before every element in the sequence of $p$ indices.
    Thus, we multiply the number of options by $(p_0+1)^p$.
    Thus, $\frac{p! \cdot \left (p_0+1\right )^p}{\left (\prod_{i>0}\left (p_i!\right )\right )}$ possible values represent all possible sequences of $p_0+p$ indices, where the indices of each parity condition itself are ordered in decreasing order.
    Every transition of the created parity automaton needs to signal a priority between $p_0+p$ and $1$.
    For that, we keep a record of what was the left-most index that is visited in a transition from one $p_0+p$ sequence to its successor.
    This brings the total number of states to the stated in the Lemma.
    We note that $p_0$ does not appear in the exponent.
    \qed
\end{proof}

We note that the complexity of solving games where the winning condition is a
conjunction of parity conditions is known to be co-NP-complete
\cite{ChatterjeeHP07}. Thus, a much simpler  conversion to parity games than the
one described in the 
previous lemma is
not possible.

    We also note that the upper bound of the previous construction
    \cite{muscholl2014distributed} was assuming that the acceptance condition is
    not part of the input.

\begin{corollary}\label{cor:upper}
    The \kl{Control Problem} for asynchronous automata over tree-like architectures is decidable in $d$-EXPTIME where $d$ is the depth of the tree.
\end{corollary}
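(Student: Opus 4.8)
The plan is to prove both decidability and the complexity bound by iterating the leaf-elimination procedure until a single process remains, and then solving the resulting parity game. Starting from the input AA $\A$, I would repeatedly pick a non-root leaf $\leaf$ of the current process tree, first apply Lemma~\ref{lemma:l-short automata} to replace $\A$ by an equivalent $\leaf$-short automaton $\A^\circledS$ (which leaves the state spaces of all processes other than $\leaf$ unchanged), and then apply Theorem~\ref{theorem:remove one leaf} to fold $\leaf$ into its parent, obtaining $\A^\minusell$ over $\Proc \setminus \{\leaf\}$. By the two equivalences, $\A$ is controllable iff $\A^\circledS$ is controllable iff $\A^\minusell$ is controllable, so controllability is preserved at every step, and after finitely many rounds the tree is reduced to a single node.

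The base case is a single-process AA with a local parity acceptance condition together with a partition of its actions into controllable and uncontrollable ones. This is exactly a parity game in which System chooses the controllable moves and Environment the uncontrollable ones, and a winning controller exists iff System wins this game. Since parity games are decidable, the whole iteration is a decision procedure, which already establishes decidability.

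For the complexity I would track the size of the process that absorbs each leaf. By Theorem~\ref{theorem:remove one leaf}, folding $\leaf$ into its parent $p$ makes the size of $\A^\minusell_p$ proportional to $|\A_p| \cdot |\Alp_p^\sys| \cdot \mathit{exp}(|\A_\leaf|)$, i.e.\ exponential in the size of the eliminated leaf. The crucial point is that this exponential is charged only when climbing one level of the tree: the several leaves sitting under a common parent are folded one after another, and because each contributes a factor exponential in a size from the \emph{same} tier, their combined effect remains within a single exponential tier. Hence a process at height $k$ above the deepest leaves has size $\text{Tower}_{O(k)}(n)$ with $n = |\A|$, and when the root is reached after $d$ level transitions the single-process game has size $\text{Tower}_{O(d)}(n)$. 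Solving that parity game adds no further tier, so the total running time is a tower of exponentials whose height is linear in $d$, placing the problem in $d$-EXPTIME and matching the lower bound of \cite{GGMW13}.

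One point that must be treated with care is the acceptance condition: at each elimination the parent's condition $\Omega_p^\minusell$ is in general a conjunction of the parity conditions of $p$ and of $\leaf$, so a naive iteration would leave a conjunction of many parity conditions, whose associated game is merely co-NP rather than a plain parity game. Here I would invoke Lemma~\ref{lemma:parity conversion} to convert the accumulated conjunction back into a single parity condition, so that every intermediate object is a genuine parity AA and the final object a genuine parity game. The main obstacle is exactly the bookkeeping that ties the tower height to the depth $d$ rather than to the number of processes: one must check that the $\leaf$-short preprocessing of Lemma~\ref{lemma:l-short automata} and the condition-merging of Lemma~\ref{lemma:parity conversion} each cost at most one extra exponential per level, and that folding sibling leaves only multiplies sizes within a single tier, so that the number of nested exponentials stays $\Theta(d)$.
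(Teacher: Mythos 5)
Your proposal is correct and follows essentially the same route as the paper's proof: iterate the leaf-elimination of Theorem~\ref{theorem:remove one leaf} (after the \kl{\leaf-short} preprocessing of Lemma~\ref{lemma:l-short automata}), use Lemma~\ref{lemma:parity conversion} to collapse the accumulated conjunction of parity conditions into a single parity condition so each intermediate automaton stays a parity AA, and observe that sibling leaves folded into a common parent cost only one exponential tier per tree level, yielding a single-process parity game of size $\text{Tower}_d(n)$. The only cosmetic difference is that the paper eliminates all leaves $\ell_1,\ldots,\ell_n$ of a parent in one batch before merging the $n+1$ parity conditions, whereas you fold them one at a time with the same tier-counting argument.
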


\begin{proof}
    We apply Theorem~\ref{theorem:remove one leaf} iteratively. 
    Choose a parent $p$ with leaves $\ell_1,\ldots, \ell_n$. 
    Apply the theorem to each one of the leaves $\ell_1,\ldots, \ell_n$.
    We end up with an equi-realizable control problem, where $\ell_1,\ldots, \ell_n$ are missing and $p$ is replaced by $p^{\setminus \ell_1,\ldots, \ell_n}$.
    The number of states of $p^{\setminus \ell_1,\ldots, \ell_n}$ is proportional to $|S_p|\cdot |\Sigma|^n \cdot \prod_{i>0} exp(|S_{\ell_i}|)$.
    Furthermore, the infinitary condition of $p^{\setminus \ell_1,\ldots, \ell_n}$ is a conjunction of $n+1$ parity conditions.
    By Lemma~\ref{lemma:parity conversion}, if the number of parities of $p$ and $\ell_1,\ldots, \ell_n$ are $p_0,\ldots, p_n$, respectively, then by multiplying the number of states of $p^{\setminus \ell_1,\ldots, \ell_n}$ by $O\left ( (p_0+p)\cdot \frac{p! \cdot \left 
    (p_0+1\right )^p}{\left (\prod_{i>0}\left (p_i!\right )\right )}\right )$ we get a parity condition with $\sum_{i\geq 0}p_i$ priorities.

    It follows that by eliminating leaves level by level, we get an asynchronous
    automaton whose size is a tower of exponentials of height $d$, where $d$ is
    the depth of the eliminated tree, and whose parity index is the sum of all
    parity indices of all processes in the tree.

    When the entire tree is reduced to a single process, we get a control problem for one process and the corollary follows.
    \qed
\end{proof}

The $d$-EXPTIME complexity bound stated in Corollary~\ref{cor:upper} is matched
by a result from~\cite{GGMW13} showing that how to reduce acceptance of a $d$-EXPSPACE
bounded Turing machine to a \kl{Control Problem}  on a \emph{tree architecture} of
depth $O(d)$ (although not exactly $d$, so some gap remains).

\section{Conclusion}

We revisited simpler constructions of Zielonka asynchronous automata, and the 
control problem, in the case of tree-like process architectures. We showed that
the quadratic construction of Zielonka automata generalizes from trees to
tree-like architectures, and also showed how this improves on  a previous, exponential construction of
asynchronous automata for triangulated dependence alphabets. Finally we showed that distributed
controller synthesis  generalizes smoothly to tree-like architectures, too.

While causal synchronisation like in the Zielonka model may appear far from real systems,
it represents a first step to handle communication. 
Communication has different flavor than rendez-vous mechanisms like in the Zielonka model, e.g.~it lacks the symmetry of Zielonka synchronisations since receivers are better informed than senders. 
At the same time, causal dependencies, like those arising in the Zielonka model, are important and can be approximated by practical forms of communication (e.g., local broadcast). 
In such a context, we believe that tree-like process architectures may provide more potential for
future applications involving communication, than tree architectures.
Indeed, tree-like architectures allow for real multi-party communication as channels 
are not restricted to binary connections.
It adds the requirement to inform all the ``region'' affected by a communication,
suggesting possible uses where interaction is connected to physical location. 
In our future work, we are working on an implementation of a 
communication infrastructure that supports a relaxed version of Zielonka type synchronization.
Our framework ensures that actions happen in the same partial order as in the theoretical (Zielonka)
model.
The communication framework currently enacts asynchronous automata that are modelled by hand.
However, if the communication architecture is tree-like we could use an
automated distribution of centrally 
crafted plans based on similar ideas as in this paper.
At a theoretical level, we plan to investigate which central plans can be distributed into a tree-like communication framework with bounded communication channels.

\clearpage
\bibliographystyle{plainurl}
\bibliography{bib}

\end{document}